\newtheorem{theorem}{Theorem}[section]
\newtheorem{lemma}[theorem]{Lemma}
\newtheorem{proposition}[theorem]{Proposition}
\theoremstyle{definition}
\newtheorem{definition}[theorem]{Definition}
\newtheorem{remark}[theorem]{Remark}
\numberwithin{equation}{section}
\DeclareMathOperator*{\argmax}{arg\,max}
\newcommand{\ind}{1\hspace{-2.1mm}{1}}
\newcommand{\RR}{\mathbb{R}}
\newcommand{\PP}{\mathbb{P}}
\newcommand{\E}{\mathrm{e}}
\newcommand{\I}{\mathrm{i}}
\newcommand{\cop}{\mathrm{c}}
\newcommand{\Cop}{\mathrm{C}}
\newcommand{\Op}{\mathfrak{O}}
\newcommand{\Cf}{\mathfrak{C}}
\newcommand{\Pf}{\mathfrak{P}}
\newcommand{\Cft}{\widetilde{\mathfrak{C}}}
\newcommand{\Pft}{\widetilde{\mathfrak{P}}}
\newcommand{\Sf}{\mathfrak{S}}
\newcommand{\Ll}{\mathcal{L}}
\newcommand{\EE}{\mathbb{E}}
\newcommand{\VV}{\mathbb{V}}
\newcommand{\LLa}{\boldsymbol{\Lambda}}
\newcommand{\Qm}{\boldsymbol{Q}}
\newcommand{\taub}{\boldsymbol{\tau}}
\newcommand{\alb}{\boldsymbol{\alpha}}
\newcommand{\thb}{\boldsymbol{\theta}}
\newcommand{\mum}{\boldsymbol{\mathrm{\mu}}}
\newcommand{\uu}{\boldsymbol{\mathrm{u}}}
\newcommand{\ww}{\boldsymbol{\mathrm{w}}}
\newcommand{\xx}{\boldsymbol{\mathrm{x}}}
\newcommand{\Eb}{\boldsymbol{\mathrm{E}}}
\newcommand{\Xb}{\boldsymbol{\mathrm{X}}}
\newcommand{\Fb}{\boldsymbol{\mathrm{F}}}
\newcommand{\UnitVec}{\boldsymbol{\mathrm{1}}}
\newcommand{\ZeroVec}{\boldsymbol{\mathrm{0}}}
\newcommand{\Rb}{\boldsymbol{R}}
\newcommand{\Sigmab}{\boldsymbol{\mathrm{\Sigma}}}
\newcommand{\FInv}{F^{\leftarrow}}
\begin{document}

\title{Portfolio optimisation with European options}

\author{Jonathan Raimana Chan}
\address{Kaiju Capital Management}
\email{chanjonathan.689@gmail.com}

\author{Thomas Huckle}
\address{Kaiju Capital Management}
\email{tom.huckle20@imperial.ac.uk}

\author{Antoine Jacquier}
\address{Department of Mathematics, Imperial College London, and Alan Turing Institute}
\email{a.jacquier@imperial.ac.uk}

\author{Aitor Muguruza}
\address{Kaiju Capital Management and Department of Mathematics, Imperial College London}
\email{aitor.muguruza-gonzalez15@imperial.ac.uk}

\date{\today}
\thanks{AJ acknowledges financial support from the EPSRC grant EP/T032146/1.}
\keywords{Options portfolio, modern portfolio theory, copulas, tail dependence}
\subjclass[2010]{46N10, 62H20, 91G10}

\maketitle
\begin{abstract}
We develop a new analysis for portfolio optimisation with options, tackling the three fundamental issues with this problem: asymmetric options' distributions, high dimensionality and dependence structure.
To do so, we propose a new dependency matrix, 
built upon conditional probabilities between options' payoffs, and show how it can be computed in closed form given a copula structure of the underlying asset prices.
The empirical evidence we provide highlights that this approach is efficient, fast and easily scalable to large portfolios of (mixed) options.
\end{abstract}

\tableofcontents

\section{Introduction}
The problem of building a portfolio of stocks,  maximising expected returns for a given level of risk, was tackled by Markowitz as far back as 1952 in his seminal paper  on portfolio selection~\cite{markowitz}. 
While this was the foundation stone for modern portfolio theory, subsequently used widely in the financial industry, and worthy of a Nobel Prize, 
it is not a universal mechanism
as it fails to tackle very asymmetric or fat-tailed distributions; 
Post-modern portfolio theory~\cite{rom1994post} was an attempt to improve it, in particular recognising that Equity asset returns are not symmetric and have fat tails.
Quite remarkably however, there seems to be little literature on how to construct optimal portfolios, not of single assets, but of European options. 
The main issue is that these options display highly asymmetric returns distributions,
making previous theories shaky. 
Investing in options instead of single assets
is a more high-risk strategy because of their `all or nothing' payout structure. 
However, purchasing long options offers advantages that stocks do not, 
in particular limited downside and high leverage. Deep out-of-the-money (OTM) options can be purchased at a fraction of the price of the underlying stock, and given a large enough movement in the underlying, large returns are at hand.

The key difference between investing in options and in stocks is the risk, in the former case, 
to lose it all, even without extreme events like default.
Unfortunately, this risk is not well represented by the variance of an option's distribution:
whereas high variance in a stock returns distribution suggests high likelihood of going either up and down,
for deep OTM options, the loss is limited to$-100\%$ and high variance is mostly due to the right tail of the distribution. 
Therefore high variance may reflect a high possibility of large positive returns, which is certainly not a risk. 
In fact, low variance may reflect an option's distribution with the majority of its mass centered around $-100\%$ returns and a high risk of an investor losing all their money. 
Minimising the variance of the portfolio in a Markowitz way may only be making things worse.
Let us illustrate the main issue with an ad-hoc example, not fully realistic, yet simple and sensible.
For some fixed time horizon $T>0$, 
consider two independent Bernoulli random variables~$X_T$ and~$Y_T$, taking values in $\{0,1\}$, 
so that, for example, $(X_t)_{t\in [0,T]}$ and $(Y_t)_{t\in [0,T]}$ represent the prices of 
two Digital options, 
and~$X_T,Y_T$ their respective payoffs.
We assign probabilities to the outcomes as follows:
$$
\PP(X_T = 0) = 0.5,\qquad
\PP(X_T = 1) = 0.5,\qquad
\PP(Y_T = 0) = 0.9,\qquad
\PP(Y_T = 1) = 0.1,
$$
so that 
$\EE[X_T]= .5$,
and
$\EE[Y_T] = .1$,
$\VV[X_T] = 0.25$
and $\VV[Y_T] = 0.09$.
It is clearly tempting to say that the~$X$ option
is less risky than~$Y$. 
Construct now, at time zero, the portfolio
$\Pi_{0} := w_{X}X_{0} + w_{Y}Y_{0}$
such that $w_{X}+w_{Y}=1$.
The variance  
$\VV[\Pi_T] = w_{X}^2\VV[X_{T}] + w_{Y}^2\VV[Y_{T}]$
of $\Pi_T$ is thus minimised if 
$$
w_{X} = \frac{\VV[Y_T]}{\VV[X_T] + \VV[Y_T]} = 0.265
\qquad\text{and}\qquad
w_{Y} = \frac{\VV[X_T]}{\VV[X_T] + \VV[Y_T]} = 0.735.
$$
Contrary to intuition, Markowitz's criterion selects more options~$Y$ than options~$X$.
This seemingly contradictory result can easily be explained by the fact that it selects options with mass centered at $-100\%$ returns, increasing rather than reducing the risk.


The picture is complicated even further when dealing with multivariate option payout distributions. 
Underlying assets, at least on Equity, have asymmetric and fat-tailed distributions, along with a covariance matrix linking them together.
This in turn implies highly asymmetric multivariate distributions for the options,
and explains why Modern Portfolio Theory struggles with option portfolios. 
A natural extension is to optimise instead for high skew and low kurtosis, or using a CRRA utility function. 
While this may work for small portfolios,
the fact that skew and kurtosis are rank~$3$ and~$4$ tensors respectively makes the problem quickly run into the curse of dimensionality and becomes unfeasible.

To tackle these issues, two approaches 
have been followed: 
the first optimises a utility function taking into account higher-order moments, such as CRRA, 
while the second optimises with respect to Greek preferences. 
In general, these papers show good results on metrics like  annualised Sharpe ratio, but most seem incapable of dealing with high-dimensional options portfolios. 
For example, 
Coval and Shumway~\cite{coval}, Saretto and Santa-Clara~\cite{san}, and Driessen and Maenhout~\cite{dri}
demonstrated that short positions in crash-protected, delta-neutral straddles have high Sharpe ratios. 
Coval and Shumway ~\cite{coval} also showed that selling naked Puts can offer good levels of returns, although this is a very risky strategy to use in practice, due to the unlimited downside of selling naked options. 
However interesting these strategies are, 
they unfortunately offer little insight on building portfolios with multiple options. 
Recently Santa Clara and Faias~\cite{Faias2017OptimalOP} 
optimised a power utility function numerically, using simulated option returns and realistic transaction costs, 
with a reported Sharpe ratio of~$0.82$, 
but only trades the risk-free bond and four Call and Put options, all on the same S\&P 500 index tracker. This approach can be adjusted to consider multiple options and stocks, but would unfortunately not scale well for large portfolios.   
Riaz and Wilmott~\cite{Riaz} proposed ways to profit from mispriced options, hedged with implied volatility, but, 
based on PDE techniques, the curse of dimensionality is a major bottleneck.
In 2013 Eraker~\cite{Eraker} extended the  mean-variance framework with a parametric model
of stochastic volatility to select weights on three options: at-the-money straddles,
OTM Puts and OTM Calls. 
His annualised Sharpe ratio reaches~$1$, 
but the optimal portfolio almost exclusively shorts Put options and goes Long on Call options, 
a very risky strategy during market meltdowns.     
Driessen and Maenhout~\cite{Dti} empirically analysed portfolios with one stock and several options, optimising a CRRA utility function,
concluding that shorting OTM options is the best line of practice, but again failed short of handling the large portfolio case.

A related problem is that of pricing basket options. 
Classical methods rely on assuming 
some dynamics for the underlying and specifying a correlation matrix between the driving noises.
In fact, an extension of the classical Dupire local volatility model~\cite{dupire1994pricing}
for single-stock European options
was developed in~\cite{galichon2006modelling}
and~\cite{langnau2009introduction}, ensuring exact calibration to a given set of basket options.
Copulas were used in~\cite{cherubini2002bivariate}
and~\cite{van2005bivariate} to evaluate options on two underlyings.
and Bernard and Czado~\cite{bernard2013multivariate} proposed GARCH dynamics
for the underlyings, 
where the dependence structure is defined through copulas.
We are not however considering path-dependent options here, 
and we instead borrow this copula idea indeed solely for the marginal distributions of the underlyings,
our goal being to analyse the induced dependence structure of the options' portfolio.

The closest approach to ours is the study by Malamud~\cite{malamud}, who replaces Markowitz' covariance matrix with a `Greek efficient' matrix. 
He achieves high Sharpe ratios but also very high kurtosis, which is somewhat to be expected in OTM option portfolios. 
This is done on options for three major stock index tracker funds. Although not demonstrated theoretically the approach, like ours, 
should scale well because the `Greek efficient' matrix is two-dimensional.

Our approach builds on~\cite{malamud} and aims at solving the high-dimensional issue while taking into account asymmetry and fat-tailed distributions of option prices.
We introduce a dependency matrix, based 
on copulas to create bivariate dependency measures
between options' payoffs
and use it to replace Markowitz' covariance matrix in the optimisation problem. 
In Section~\ref{sec:Copulas}, we briefly review copulas and show how to select the right ones via maximum likelihood.
Section~\ref{sec:DependPtf} is the key part,
where we introduce the new dependency matrix for portfolios of options and show in particular that, given a copula structure between assets, every term in the matrix can be computed in closed form.
In Section~\ref{sec:PtfOptim}, we describe the optimisation procedure and perform a detailed backtest analysis of it in Section~\ref{sec:Results}.

\section{Modelling option dependence with copulas}\label{sec:Copulas}

Copulas are a common tool to model multivariate distributions. 
They take marginal distributions from multiple random variables as inputs and are flexible enough to model complex non-linear dependence structures.
We review here their fundamental properties and mention a few which will be useful later to characterise  the dependence between OTM options in a portfolio.

\subsection{Fundamentals of copulas}
For a given random vector~$\Xb \in \RR^n$, 
its cumulative distribution function (cdf) is the map
$\Fb:\RR^n\to[0,1]$ defined as
$$
\Fb(\xx)=\PP\left(X_{1} \leq x_{1}, \ldots, X_{n} \leq x_{n}\right),
\qquad\text{ for any }\xx = (x_{1}, \ldots, x_{n}) \in \RR^n.
$$
Assuming that the coordinates $\{F_{i}:=\PP(X_i\leq \cdot)\}_{i=1,\ldots,n}$ of~$\Fb$
are continuous functions, 
the random vector $(F_1(X_1), \ldots, F_n(X_n))$
has uniform marginal distributions on $[0,1]$.

\begin{definition}
A map $\Cop:[0,1]^n \to [0,1]$ is called an $n$-dimensional copula 
if it is the joint cdf of an $n$-dimensional random vector on 
$[0,1]^{n}$ with uniform marginals.
\end{definition}
The key result about copulas is due to Sklar~\cite{sklar}:
\begin{theorem}\label{thm:Sklar}
For any multivariate cdf~$\Fb$, 
there exists a copula~$\Cop$ such that
$$
\Fb(\xx) = \Cop\Big(F_1(x_1), \ldots, F_n(x_n)\Big),
\qquad\text{ for all }\xx \in \RR^n.
$$
\end{theorem}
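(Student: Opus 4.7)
The plan is to construct the copula $\Cop$ explicitly as the joint cdf of the probability-integral-transformed random vector, and then verify the factorisation. I first concentrate on the continuous-marginals setting, which is the one just introduced in the paper and the only one needed for the copula constructions in later sections; the general case is handled afterwards by a standard randomisation trick.

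First I would introduce the generalised inverse (quantile function) $F_i^{\leftarrow}(u) := \inf\{x \in \RR : F_i(x) \geq u\}$ for each marginal, and recall the two properties we will use: $F_i^{\leftarrow}$ is left-continuous and nondecreasing, and when $F_i$ is continuous one has $F_i(F_i^{\leftarrow}(u)) = u$ for every $u \in (0,1)$ and $F_i^{\leftarrow}(F_i(x)) \leq x$ with equality up to a $\PP$-null set under~$X_i$. Using the probability integral transform recalled before the theorem, each $U_i := F_i(X_i)$ is uniform on $[0,1]$ whenever $F_i$ is continuous, so one can \emph{define}
\begin{equation*}
\Cop(u_1,\ldots,u_n) := \PP\bigl(F_1(X_1) \leq u_1,\ \ldots,\ F_n(X_n) \leq u_n\bigr),
\qquad (u_1,\ldots,u_n) \in [0,1]^n.
\end{equation*}
By construction $\Cop$ is a joint cdf on $[0,1]^n$, and its marginals are those of $U_1,\ldots,U_n$, hence uniform, so $\Cop$ is a copula in the sense of the preceding definition.

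The verification of the factorisation is then a one-line computation: for any $\xx \in \RR^n$,
\begin{equation*}
\Cop\bigl(F_1(x_1),\ldots,F_n(x_n)\bigr)
= \PP\bigl(F_1(X_1) \leq F_1(x_1),\ \ldots,\ F_n(X_n) \leq F_n(x_n)\bigr)
= \PP\bigl(X_1 \leq x_1,\ \ldots,\ X_n \leq x_n\bigr)
= \Fb(\xx),
\end{equation*}
where the middle equality uses that, for continuous $F_i$, the events $\{F_i(X_i) \leq F_i(x_i)\}$ and $\{X_i \leq x_i\}$ coincide up to a $\PP$-null set (this is precisely where continuity is used; without it, atoms of $F_i$ create a gap between the two events).

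The hardest part is the general case, where some $F_i$ may have jumps and $F_i(X_i)$ is no longer uniform. The cleanest remedy is the distributional transform: enlarge the probability space to carry an independent $V = (V_1,\ldots,V_n)$ of i.i.d.\ uniform variables on $[0,1]$, and set $U_i := F_i(X_i^-) + V_i\bigl(F_i(X_i) - F_i(X_i^-)\bigr)$, where $F_i(X_i^-)$ is the left limit. A short argument shows that each $U_i$ is uniform on $[0,1]$ and that $F_i^{\leftarrow}(U_i) = X_i$ almost surely; defining $\Cop$ as the joint cdf of $(U_1,\ldots,U_n)$ and repeating the chain of equalities above yields Sklar's identity in full generality. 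I would only sketch this step, since the paper restricts to continuous marginals in all subsequent applications, and note that uniqueness of $\Cop$ holds on $\prod_i F_i(\RR)$ but not beyond, which is why the theorem is stated as existence only.
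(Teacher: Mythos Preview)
The paper does not prove this theorem at all: it is stated as a quotation of Sklar's classical result, with a citation to~\cite{sklar}, and is used purely as background for the copula machinery in Section~\ref{sec:Copulas}. There is therefore no ``paper's own proof'' to compare against.

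Your argument is nonetheless correct and is essentially the standard modern proof. The construction $\Cop(\uu) := \PP(F_1(X_1)\le u_1,\ldots,F_n(X_n)\le u_n)$ in the continuous case, together with the null-set identification of $\{F_i(X_i)\le F_i(x_i)\}$ and $\{X_i\le x_i\}$, is exactly right; the distributional-transform extension you sketch for the general case is the argument of R\"uschendorf and is also correct. One minor point of presentation: in the verification step you appeal to the events coinciding ``up to a $\PP$-null set'' coordinate by coordinate, but the equality you need is for the intersection; it would be cleaner to note explicitly that the symmetric difference of the two $n$-fold intersections is contained in $\bigcup_i\{F_i(X_i)=F_i(x_i),\,X_i>x_i\}$, each term of which has probability zero because $F_i$ is flat there. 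Since the paper only invokes Sklar in the continuous-marginals setting (as you observe), the first half of your proof already suffices for everything that follows.
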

If each~$F_i$ admits a density~$f_i$, 
we can express the copula density as
\begin{equation}\label{eq:CopulaDensity}
\cop(\uu) := \partial_{u_{1}, \ldots, u_{n}}\Cop(\uu)
 = \frac{f\left(\xx\right)}{\prod_{i=1}^{n} f_{i}\left(x_{i}\right)},
\qquad\text{ for any }\uu \in [0,1]^n,
\end{equation}
where $\xx$ and $\uu$ are related via \begin{equation}\label{eq:uxRelation}
u_i = F_i(x_i), 
\qquad\text{for each }i=1,\ldots, n.
\end{equation}

The so-called Fr\'echet–Hoeffding bounds
will be useful later:
\begin{equation}\label{eq:FrechetHoeffding}
\max \left\{\sum_{i=1}^{n} u_{i}+1-n, 0\right\}
\leq \Cop(\uu)
\leq \min \left\{u_{1}, \ldots, u_{n}\right\},
\qquad\text{ for any }\uu \in [0,1]^n,
\end{equation}
It can be shown that the upper bound is always sharp, in the sense that the $\min$ function
defines a copula.
However this is not true for the lower bound unless $n=2$.

\subsection{Examples of copulas}

Many classes of copulas have been used to model multivariate distributions. 
Choosing the right one has often been influenced by pre-perceived ideas about the structure of data dependence, 
although, we will see later in Section~\ref{sec:FittingData}
how this choice can be made solely through a data-driven procedure.
In our short review here, 
we concentrate on classes of copulas widely used in Finance, 
discussing their merits and disadvantages, 
and cast a closer look at specific ones 
we shall use in our analysis.

\subsubsection{Elliptical copulas}

In the early 2000s, elliptical copulas,
a type of implicit copulas,
were widely used in Finance.
The name comes from their derivation from elliptical multivariate distributions~\cite{Cambanis} via Sklar's Theorem~\ref{thm:Sklar}.

\begin{definition}
A random vector $\Xb \in \RR^n$ has an elliptical distribution
if there exist a function $\psi:\RR\to\RR$, 
a location parameter $\mum\in\RR^n$ and
a non-negative-definite square real matrix~$\Sigmab$
such that
$$
\EE\left[\exp\left\{\I\uu^\top \Xb\right\}\right] = \psi\Big(\uu^\top\mum + \uu^\top\Sigmab\uu\Big),
\qquad\text{for any }\uu\in\RR^n.
$$
\end{definition}
The simplest example is the Gaussian copula,
where~$\Xb$ is centered Gaussian with covariance matrix~$\Sigmab$, so that
$$
\Cop_{\Sigmab}^{\mathrm{G}}(\uu)
 = \Fb_{\Xb}\left(\Phi^{-1}(u_{1}), \ldots, \Phi^{-1}(u_{n})\right),
\qquad\text{for any }\uu\in\RR^n,
$$
where~$\Phi$ is the one-dimensional standard Gaussian cumulative distribution function
and~$\Fb_{\Xb}$ the cdf of~$\Xb$ .
In general, elliptical copulas are relatively easy to fit 
given an estimate of the covariance matrix. 
They are symmetric and tail dependence, essential for modelling option portfolios, can be adapted to the data.
They became widespread in Finance to model default correlations across large portfolios of CDOs~\cite{Li};
however they significantly underestimate tail dependence, leading them to become `the Number~1 formula that killed Wall Street'~\cite{Salmon} following the 2008 crisis.
Other elliptical copulas may alleviate this, 
but are usually not flexible enough to deal with asymmetric or extreme dependence~\cite{CHICHEPORTICHE_2012}, 
a common feature in option portfolios. 

\subsubsection{Archimedean copulas}
Archimedean copulas~\cite[Chapter 4]{nelsen2007introduction} can handle a large variety of different symmetric and asymmetric dependence structures and have thus 
seen increased interest in risk management and natural disaster modelling. 
They are however hard to interpret and the right choice is not (yet) supported by strong arguments~\cite{CHICHEPORTICHE_2012}.

\begin{definition}
A monotone decreasing function $\psi:[0,\infty) \to [0,1]$ with $\psi(0)=1$ and $\lim_{z\uparrow\infty}\psi(z)=0$.
is called a generator.  
An Archimedean copula~$\Cop$ can be represented via a generator~$\psi$ as
$$
\Cop(\uu)=\psi^{-1}\left(\psi\left(u_{1}\right)+\cdots+\psi\left(u_{n}\right)\right),
\qquad\text{for all } \uu=(u_1,\ldots,n_n)\in\RR^n.
$$
\end{definition}

Provided that~$\psi$ is smooth enough, then from~\eqref{eq:CopulaDensity} we have an analytical expression for the copula density function $\cop(\cdot)$, which will prove useful for Maximum Likelihood Estimation:

\begin{equation}\label{eq:copArchi}
\cop(\uu)=\frac{\psi\left(\psi^{-1}\left(u_{1}\right)+\cdots+\psi^{-1}\left(u_{n}\right)\right)}{\prod_{j=1}^{n} \psi'\left(\psi^{-1}\left(u_{j}\right)\right)},
\end{equation}
for all $\uu=(u_1,\ldots,n_n)\in\RR^n$
such that $\psi'\left(\psi^{-1}\left(u_{j}\right)\right)\ne 0$.
Archimedean copulas are one-parameter models and are most commonly used to model bivariate distributions.
These $n$-dimensional models can then be linked together with Vine copulas, as discussed below, to model returns and dependence in higher dimensions. 
For the rest of the paper, unless explicitly stated we will be focusing on two-dimensional Archimedean copulas.
Should we model~$m$ stocks, we would have $\frac{m(m-1)}{2}$ bivariate Archimedean copulas and the same number of parameters,
which scales as $\mathcal{O}(m^2)$, 
so unless the portfolio gets very large, we should not suffer from issues with dimensionality.
There are many Archimedean copulas, which can even be mixed together to form new copulas to model all kinds of interesting dependence structures~\cite[Chapter 4]{nelsen2007introduction}. 
We also refer to~\cite{nagler2019model} for further examples of parametric pair-copulas in the context of portfolio risk.
Table~\ref{tab:ArchimCopulas} gathers three important copulas, on which we shall focus from now on.
For each of them, a parameter~$\theta$ decides the strength of dependence between the two variables,
and will be analysed later when fitting to data.

\begin{table}
\begin{equation*}
\begin{array}{|c|c|c|c|}
\hline & \text { Generator $\psi$ } & \text { Inverse $\psi^{-1}$ } & \text { Bivariate copula $C_{\psi}$ } \\
\hline \begin{array}{c}
\text {Clayton}\\
\text{}
\end{array}
& t^{-\theta}-1
& (1+s)^{-\frac{1}{\theta}}
& \text{max}\{\left(u^{-\theta}+v^{-\theta}-1\right),0\}^{-\frac{1}{\theta}}\\
\hline \begin{array}{c}
\text{Frank}\\
\text{}
\end{array}
& \ln \left(\frac{\E^{\theta t}-1}{\E^{\theta}-1}\right)
& -\frac{1}{\theta} \log\left(1+\E^{s}\left(\E^{-\theta}-1\right)\right)
& -\frac{1}{\theta}\log\left(1+\frac{\left(\E^{-\theta u}-1\right)\left(\E^{-\theta v}-1\right)}{\E^{-\theta}-1}\right)\\
\hline \begin{array}{c}
\text{Gumbel}\\
\text{}
\end{array}
& (-\ln t)^{-\theta}
& \exp \left(-s^{\frac{1}{\theta}}\right)
&\exp \left\{-\left((-\log (u))^{\theta}+(-\log (v))^{\theta}\right)^{\frac{1}{\theta}}\right\} \\
\hline
\end{array}
\end{equation*}
\caption{Examples of Archimedean copulas.}
\label{tab:ArchimCopulas}
\end{table}

\subsubsection{Empirical copulas}
Instead of specifying a parametric class of copulas, 
one can instead characterise them empirically from data.
For any $k=1,\ldots, n$ and a sample of size~$m$,
consider the marginal empirical cdf
$$
\widehat{F}_{m,k}(x)
 := \frac{1}{m + 1} \sum_{i=1}^{m} \ind_{\left\{X_{k}^{(i)} \leq x\right\}},
 \qquad\text{for any }x\in\RR,
$$
where $\{X_{k}^{(i)}\}_{i=1,\ldots,m}$ represents $m$ realisations of the random variable~$X_{k}$.
We can therefore write the empirical copula as
\begin{equation}\label{eq:EmpiricalCopula}
\widehat{\Cop}_{n}(\uu)
 := \frac{1}{n} \sum_{i=1}^{n}\ind_{\left\{\widehat{F}_{m,1}\left(x_{1}\right) \leq u_{1}, \ldots, \widehat{F}_{m,n}\left(x_{n}\right) \leq u_{n}\right\}},
\qquad\text{for any }\uu\in\RR^n,
\end{equation}
where~$\xx$ and~$\uu$ are again defined via~\eqref{eq:uxRelation} using~$\widehat{F}_i$ instead of~$F_i$.
The empirical copula~$\widehat{\Cop}_n$ is known to converge weakly towards the true copula~$\Cop$
as the number of data points~$n$ increases, 
as long as~$\Cop$ has continuous partial derivatives~\cite[Lemma 7]{Cop_Convergence}.
Unfortunately, lack of large options datasets often makes it hard to use, 
but it will nevertheless come handy as a goodness-of-fit test for other Archimedean copulas.

\subsubsection{Vine copulas}
Although Archimedean copulas seem to offer a flexible way to model different dependence structures between pairs, 
it is not clear how to use them for high-dimensional dependency structures. 
Vine copulas provide a good way to model each pair of returns using Archimedean copulas and join them together in a vine, 
allowing highly complex dependency structures in high dimensions. 
Due to this flexibility, Vine copulas have become of increasing interest in practice, 
finding uses in portfolio optimisation and risk management~\cite{vines_RM}.
Joe~\cite{IFM} and Bedford and Cooke~\cite{bedford2002vines} introduced vine copulas and highlighted their flexibility over classical copulas. They defined a so-called R-vine as follows:
\begin{definition}
An R-vine with $n$ elements is a sequence of tree structures $(T_1, \ldots, T_n)$ such that
\begin{itemize}
\item $T_1$ is a tree with nodes $N_1 = \{1, \ldots, n\}$ and a set of edges~$E_1$;
\item For each $i=2, \ldots, n-1$, $T_i$ with nodes $N_i = E_{i-1}$ and edge set~$E_i$;
\item For each $i=2, \ldots, n-1$ and $\{\mathrm{a}=\{a_1,a_2\}, \mathrm{b}=\{b_1,b_2\}\} \in E_i$ then the cardinal of $\mathrm{a}\cap\mathrm{b}$
must be equal to~$1$.
\end{itemize}
\end{definition}
Recall that~\cite[Definition~2.4]{bedford2002vines} a tree is an acyclic graph, 
so that~$n$ nodes can only result in a maximum of $n-1$ edges.
R-vines are very flexible and possess many interesting and useful properties,
detailed in~\cite{bedford2002vines, kurowicka2010optimal}.
They also encompass two particular structures, D-vines and C-vines.
The latter are characterised by their star structures (at least one node in each tree has maximal degree)
while D-vines are a subclass of R-vines such that the first tree~$T_1$
has nodes with degree at most equal to two (recall that the degree of a node corresponds to the number of neighbours in the tree).
As shown in~\cite{aas2009pair}, and developed in~\cite{hofmann2010assessing, kraus2017d},
D-vines have enhanced statistical modelling properties, which make them easier and more suitable for applications.
We shall thus focus on these from now on.
The underlying idea of D-vine copulas is that the multivariate probability density can be constructed as a product of smaller bivariate copula densities and marginal density functions~\cite{Vine,Vine_d}:
\begin{equation}
f(\xx) = \prod_{j=1}^{n-1} \prod_{i=1}^{n-j} c_{i,i+j |(i+1), \cdots,(i+j-1)} \cdot \prod_{k=1}^{n} f_{k}(x_{k}),
\qquad \text{for all }\xx \in \RR^n,
\end{equation}
where the coefficients 
$c_{i,i+j |(i+1), \cdots,(i+j-1)}$ represent the conditional copula densities for the pair of variables $(i,i+j)$, given the variables indexed in between~\cite{Vines_copula_pairs}.
We refer the interested reader to~\cite{Vine} for a thorough review of Vine copulas.

\subsection{Fitting copulas to data}\label{sec:FittingData}
Given a class of copulas, we show how to apply Maximum Likelihood Estimation (MLE)
to fit the copula parameters to data.
This is not the only way to fit copulas, and a detailed survey of goodness-of-fit tests can be found in~\cite{genest2009goodness}. 
Because of the availability of the density function, we find MLE suitable enough for our analysis though.
We assume that the joint probability density function $f(\Xb;\thb)$ of the $n$-dimensional random variable~$\Xb$ exists and is dependent on the vector of parameters $\thb = (\theta,\alb)$, where~$\theta$ is the  copula parameter 
and~$\alb$ are the parameters associated with~$\Xb$.
Given~$m$ independent observations
$\Xb^{(1)}, \ldots, \Xb^{(m)}$
of~$\Xb$, we defined the likelihood function as
\begin{equation*}
\Ll_{m}(\thb) := \prod_{j=1}^{m} 
f\left(\Xb^{(j)} ; \thb\right).
\end{equation*}
From Sklar's Theorem~\ref{thm:Sklar}, we can rewrite the joint density function $f(\Xb;\thb)$ as a product of the copula density function $\cop(\uu;\theta)$ and the marginal density functions $f_i(x_i;\alpha_i)$ as in~\cite{choros2010copula}:
$$
f\left(\Xb;\thb\right) = c\left(\uu;\theta \right) {\prod_{i=1}^{d} f_{i}\left(X_{i};\alpha_i \right)}.
$$
Maximising the log-likelihood function 
is equivalent to maximising the likelihood function~$\Ll_m$.
We can then split the log-likelihood function into two parts as
$$
l_m(\thb) 
:= \log \Ll_m(\thb)=  \sum_{j=1}^{m} \log\left(\cop(\uu^{(j)};\theta)\right)
+ \sum_{i = 1}^{n} \sum_{j = 1}^{m} \log\left(f_i\left(X_i^{(j)};\alpha_i\right)\right),
$$
where the second term on the right-hand side
corresponds to the log-likelihood under the independence assumption.
We then estimate the parameters $\thb = (\theta,\alb)$ via
$$
\widehat{\thb} := \argmax _{\thb}
l_m(\thb).
$$
Alternatively one can estimate the parameters using the method of inference functions~\cite{IFM},
which is a two-step process:
it first finds $\{\alpha_i^{\text{IFM}}\}_{i=1,\ldots,n}$ for all the marginal distributions, 
maximising the likelihood function for the individual marginals $\{f_i(X_i;\alpha_i)\}_{i=1,\ldots,n}$;
it then determines~$\theta^{\text{IFM}}$ by maximising the total log-likelihood function using the estimates $\{\alpha_i^{\text{IFM}}\}_{i=1,\ldots,n}$~\cite{choros2010copula}. 
A simpler method for estimation is with semi-parametric estimation. 
Instead of estimating the marginal density functions~$\{f_i\}_{i=1,\ldots,n}$ parametrically, we can use the empirical data to estimate the marginal cdfs~$\{\widehat{F}_i(X_i)\}_{i=1,\ldots,n}$. 
Then we only have a one-step optimisation procedure to find the best estimate
$$
\widehat{\thb} = \argmax _{\thb} \sum_{j=1}^{m} \log\left(\cop\left(\widehat{F}_{1}\left(X_{1}^{(j)}\right), \ldots, \widehat{F}_{n}\left(X_{n}^{(j)}\right) ; \theta\right)\right).
$$
It is shown in~\cite{Semi_par} that this semi-parametric estimator is consistent for almost all Archimedean copulas, under some copula regularity conditions. 
The one-step MLE estimation approach is likely to yield the "best" estimate for~$\theta$, 
but it also tends to be slow, 
without exact solution, requiring a computationally costly optimisation procedure. 
The IFM method and semi-parametric methods are much quicker to solve, and yield consistent estimators~\cite{choros2010copula}. 
For this reason, we shall later use IFM or the semi-parametric approach.
Once the optimal~$\widehat{\thb}$ has been found
for each of the suggested copulas,
we need to decide which one is the best for a particular pair of stock returns.
We do so by picking the copula that most closely resembles the empirical copula~\eqref{eq:EmpiricalCopula}. 
More precisely, given~$n$ (Archimedean) copulas $\Cop_i(\thb_i)$, we compute the~$L^2$ distance
(other norms could be chosen)
$$
\left\|\widehat{\Cop} - \Cop_{i}\right\|_{L^{2}}
 := \left(\int_{[0,1]^{2}}\left|\widehat{\Cop}(\uu) - \Cop_{i}(\uu)\right|^{2} \mathrm{~d} \uu\right)^{1/2},
$$
and select the copula with the smallest error.

\section{Constructing dependence structure for options portfolios}\label{sec:DependPtf}

\subsection{Tail dependence with copulas}
As mentioned previously, the key issue with options portfolios is their dependence structure, 
which occurs mostly in the tails
where out-of-the-money options generate positive payoffs.
We thus introduce tail dependence coefficients
to measure dependence of random variables in their tail distributions.
We fix for now two continuous random variables~$X_1$ and~$X_2$ and consider a copula~$\Cop$ between them.
We are not assuming that their distributions functions are invertible, 
so the left-arrow inverse notation~$\FInv$ shall always refer to the generalised inverse, which is well defined.

\begin{definition}
The upper and lower tail dependence coefficients 
are defined as
\begin{equation}\label{eq:TailDepCoef}
\begin{array}{rl}
\lambda_{U}\left(X_{1}, X_{2}\right)
& \displaystyle := \lim _{u \uparrow 1} \PP\left(X_{2}>\FInv_{2}(u) \Big| X_{1}>\FInv_{1}(u)\right),\\
\lambda_{L}\left(X_{1}, X_{2}\right) 
& \displaystyle := \lim _{u \downarrow 0} \PP\left(X_{2} \leq \FInv_{2}(u) \Big| X_{1} \leq \FInv_{1}(u)\right).
\end{array}
\end{equation}
\end{definition}
Let us state a few properties of these coefficients:
\begin{proposition}\label{prop:TailDepCoef}
Both~$\lambda_{L}$ and~$\lambda_{U}$
are symmetric in their arguments and
\begin{equation}\label{TailCoefCop}
\lambda_{U}(X_1,X_2)=\lim _{u \uparrow 1} \frac{1-2 u+\Cop(u, u)}{1-u} 
\qquad\text {and}\qquad
\lambda_{L}(X_1,X_2)=\lim _{u \downarrow 0} \frac{\Cop(u,u)}{u}.
\end{equation}
\end{proposition}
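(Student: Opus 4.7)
The plan is to unwind each conditional probability in~\eqref{eq:TailDepCoef} via Bayes' rule, identify both numerator and denominator through Sklar's Theorem~\ref{thm:Sklar}, and only then pass to the limit. The symmetry claim is essentially free, because the resulting expressions depend on the copula~$\Cop$ only through its values on the diagonal~$(u,u)$.

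For the lower coefficient, I would first rewrite the conditional probability as the ratio of a joint probability to a marginal one. Continuity of the marginals guarantees $F_i(\FInv_i(u)) = u$ for every $u\in(0,1)$, so the denominator collapses to~$u$; meanwhile Sklar's theorem identifies $\PP(X_1\leq \FInv_1(u),\, X_2\leq \FInv_2(u))$ with $\Cop(F_1(\FInv_1(u)), F_2(\FInv_2(u))) = \Cop(u,u)$. The stated formula for~$\lambda_L$ then follows after sending $u\downarrow 0$.

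For the upper coefficient the denominator becomes $1-u$, and the joint upper-tail event is handled by inclusion--exclusion against the two lower-tail events, producing $1 - 2u + \Cop(u,u)$; dividing and letting $u\uparrow 1$ yields the second identity in~\eqref{TailCoefCop}. Symmetry of both~$\lambda_L$ and~$\lambda_U$ in their arguments then follows at once, since the copula only appears on its diagonal, and the diagonal is invariant under swapping the two coordinates.

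The only mildly delicate step is the identity $F_i(\FInv_i(u))=u$, which can fail for a generalised inverse when the marginal cdf has flat spots; the continuity assumption on~$X_1$ and~$X_2$ imposed at the start of the subsection is precisely what rescues it. Once this observation is in place, the argument is essentially routine computation, and I do not anticipate any further obstacles.
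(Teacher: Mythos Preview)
Your proposal is correct and covers the same ground as the paper, but the two proofs are organised in reverse order. The paper first establishes symmetry of~$\lambda_U$ and~$\lambda_L$ directly, by writing each conditional probability as a ratio and observing that the denominator $\PP(X_1>\FInv_1(u))=1-u=\PP(X_2>\FInv_2(u))$ (respectively $=u$ in the lower case) can be swapped freely between the two marginals before reassembling the conditional probability the other way round; it then simply cites Nelsen for the copula representations~\eqref{TailCoefCop}. You instead derive~\eqref{TailCoefCop} explicitly via Sklar and inclusion--exclusion, and deduce symmetry as a corollary of the fact that the copula of $(X_2,X_1)$ agrees with that of $(X_1,X_2)$ on the diagonal. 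Your route is more self-contained (no external citation needed), while the paper's symmetry argument is slightly more elementary in that it never invokes Sklar's theorem; but the underlying computations are identical, and your remark about continuity being exactly what is needed for $F_i(\FInv_i(u))=u$ is a point the paper leaves implicit.
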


\begin{proof}
To prove the symmetry property, we first write
\begin{align*}
\lambda_{U}\left(X_{1}, X_{2}\right)
 & = \lim _{u \uparrow 1} \PP\left(X_{2}>\FInv_{2}(u) \Big| X_{1}>\FInv_{1}(u)\right)\\
 & = \lim _{u \uparrow 1} \frac{\PP\left(\{X_{2}>\FInv_{2}(u)\} \cap \{X_{1}>\FInv_{1}(u)\}\right)}{ \PP\left(X_{1}>\FInv_{1}(u)\right)}\\
 & = \lim _{u \uparrow 1} \frac{\PP\left(\{X_{2}>\FInv_{2}(u)\}\cap \{X_{1}>\FInv_{1}(u)\}\right)}{1 - u}\\
 & = \lim _{u \uparrow 1} \frac{\PP\left(\{X_{2}>\FInv_{2}(u)\}\cap \{X_{1}>\FInv_{1}(u)\}\right)}{ \PP\left(X_{2}>\FInv_{2}(u)\right)}\\
 & = \lim _{u \uparrow 1} \PP\left(X_{1}>\FInv_{1}(u) \Big| X_{2}>\FInv_{2}(u)\right)
  = \lambda_{U}\left(X_{2}, X_{1}\right).
\end{align*}
Similarly
\begin{align*}
\lambda_{L}\left(X_{1}, X_{2}\right) 
 & = \lim _{u \downarrow 0}
\PP\left(X_{2} \leq \FInv_{2}(u) \Big| X_{1} \leq \FInv_{1}(u)\right)\\
 & = \lim _{u \downarrow 0} 
\PP\left(X_{1} \leq \FInv_{1}(u) \Big| X_{2} \leq \FInv_{2}(u)\right)
 = \lambda_{L}\left(X_{2}, X_{1}\right).
\end{align*}
The representations~\eqref{TailCoefCop}
come from~\cite[Theorem~5.4.2]{nelsen2007introduction}.
\end{proof}

The tail dependence coefficients range from~$0$ for no dependence to~$1$ for complete dependence.
Ideally a well-diversified portfolio of options should penalise high tail dependence, 
in particular for deep out-of-the-money options where extremes of the distributions dominate. 
For example, for three deep OTM Call options on three stocks, with two of them having upper tail~$\lambda_U$ close to~$1$ and  one of them being close to~$0$, it may be wise to avoid a linear combination of the first two since they will only payout at the same time.
However tail dependence has many limitations as a dependence measure that make it not fully suitable for measuring dependence in options portfolio. 
First it only compares dependence at the very end of both tails, which is hard to map to observed moneynesses. Second consider two Call options on the same stock
with different strikes; they will have tail dependence equal to~$1$ no matter how far apart the strikes are. This lack of flexibility and interpretability means that one may be unable to use them as dependency measures. 

\subsection{Conditional probabilities}

Inspired by the tail dependence coefficients~\eqref{eq:TailDepCoef}, 
we introduce a new metric 
given by conditional probabilities that an option pays out given that another pays out.
The two options have the same maturity
and bivariate copulas can be used to generate explicit formulas.
Consider two European options~$\Op^1$ and~$\Op^2$ with associated strikes~$K_1$ and~$K_2$, 
on two stocks~$S^1$ and~$S^2$ with continuous distributions, with the same fixed maturity~$T$ and copula~$\Cop$ between them.
For $i=1,2$, we shall further write~$\Op^i_+$ to indicate the event that~$\Op^i$ has strictly positive payoff, and define the conditional probability 
\begin{equation}\label{eq:muFormula}
\mu(\Op^1,\Op^2) :=\PP\left(\Op^1_+\Big|\Op^2_+\right), 
\end{equation}
so that
with~$\Cf$, $\Pf$ denoting Calls and Puts, we have
\begin{align*}
\mu(\Cf^1,\Cf^2) & :=
\PP\left(\Cf^1_+\Big|\Cf^2_+\right)
= \PP\left(\left(S_T^1 - K_1\right)_{+} > 0 \Big| \left(S_T^2 - K_2\right)_{+} > 0 \right),\\
\mu(\Pf^1,\Pf^2) & :=
\PP\left(\Pf^1_+\Big|\Pf^2_+\right)
= \PP\left(\left(K_1-S_T^1\right)_{+} > 0 \Big| \left(K_2-S_T^2\right)_{+} > 0 \right).
\end{align*}
Beyond standard Calls and Puts, 
many other European options can be handled in our framework with a little bit of algebra.
We consider for example the (slightly more involved) case of a Strangle, 
which combines a Put and a Call (on the same underlying) with the strike of the latter larger than the strike of the former.
We let~$\Sf^i$ ($i=1,2$) denote the Strangles,
with their associated strikes $K^i< \widetilde{K}^i$, so that the positive payoff of~$\Sf^i$ can be written as
$$
\Sf^i_+ = \left\{\left(S_T^i - \widetilde{K}_i\right)_{+}  > 0\right\}
\cup 
\left\{\left(K_i - S_T^i\right)_{+} > 0\right\}
 = \Cft^i_+ \cup\Pf^i_+,
$$
where~$\Cft^i$ corresponds to a Call with strike~$\widetilde{K}^i$.
Let us first state and prove the following lemma, linking payoff probabilities to the copula between the two stock prices.

\begin{lemma}\label{lem:AllProba}
With $(u_1,u_2) = (F_1(K_1), F_2(K_2))$, we have
\begin{align}
\PP\left(\Cf^1_+\cap\Cf^2_+\right)
&= 1 - u_1 - u_2 + \Cop(u_1,u_2),\label{eq:Ct1pCt2p}\\
\PP\left(\Pf^1_+\cap\Pf^2_+\right)
 & = \Cop(u_1,u_2),\label{eq:P1pP2p}\\
\PP\left(\Cf^1_+\cap \Pf^2_+\right)
 & = u_2 -  \Cop(u_1, u_2)\label{eq:Ct1pP2p}\\
\PP\left(\Pf^1_+\cap\Cf^2_+\right)
 & = u_1 -  \Cop(u_1, u_2),\label{eq:P1pCt2p}\\
\PP\left(\Cf^2_+\cup\Pf^2_+\right)
 & = 1 - u_2 + u_2,\label{eq:Ct2pP2p}.
\end{align}
\end{lemma}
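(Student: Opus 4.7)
The plan is to reduce every identity to two basic facts: first, Sklar's Theorem~\ref{thm:Sklar} gives $\PP(S_T^1\le K_1, S_T^2\le K_2)=\Cop(F_1(K_1),F_2(K_2))=\Cop(u_1,u_2)$; second, since~$S_T^1$ and~$S_T^2$ have continuous distributions, strict and non-strict inequalities carry the same probability, so the events $\Cf^i_+ = \{S_T^i>K_i\}$ and $\Pf^i_+=\{S_T^i<K_i\}$ can be freely manipulated with their closed counterparts, and in particular $\PP(\Cf^i_+)=1-u_i$, $\PP(\Pf^i_+)=u_i$. With these in hand, the five identities reduce to routine inclusion--exclusion arguments.

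Concretely, I would dispatch them in the following order. Identity~\eqref{eq:P1pP2p} is immediate from Sklar's Theorem applied to $\{S_T^1\le K_1\}\cap\{S_T^2\le K_2\}$, using the continuity of the marginals to replace~$<$ by~$\le$. Identity~\eqref{eq:Ct1pCt2p} follows from the complementary formula
\[
\PP\bigl(\{S_T^1>K_1\}\cap\{S_T^2>K_2\}\bigr)
 = 1-\PP\bigl(\{S_T^1\le K_1\}\cup\{S_T^2\le K_2\}\bigr),
\]
expanded via inclusion--exclusion, together with the marginal probabilities and~\eqref{eq:P1pP2p}. Identity~\eqref{eq:Ct1pP2p} comes from partitioning
\[
\{S_T^2<K_2\} = \bigl(\{S_T^1\le K_1\}\cap\{S_T^2<K_2\}\bigr)
\sqcup\bigl(\{S_T^1>K_1\}\cap\{S_T^2<K_2\}\bigr),
\]
so that the left-hand probability~$u_2$ splits as $\Cop(u_1,u_2)+\PP(\Cf^1_+\cap\Pf^2_+)$; identity~\eqref{eq:P1pCt2p} follows by swapping the roles of the indices~$1$ and~$2$.

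For~\eqref{eq:Ct2pP2p}, the events $\Cf^2_+=\{S_T^2>K_2\}$ and $\Pf^2_+=\{S_T^2<K_2\}$ are disjoint (their intersection is contained in $\{S_T^2=K_2\}$, which has probability zero by continuity of~$F_2$), so
\[
\PP\bigl(\Cf^2_+\cup\Pf^2_+\bigr) = \PP(\Cf^2_+) + \PP(\Pf^2_+) = (1-u_2) + u_2,
\]
matching the stated formula. None of the steps poses a genuine obstacle; the only points demanding care are the systematic use of continuity to interchange $<$ and~$\le$, and the bookkeeping ensuring that the identification $u_i=F_i(K_i)$ is applied consistently so that Sklar's copula representation plugs in cleanly each time.
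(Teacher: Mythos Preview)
Your proposal is correct and follows essentially the same route as the paper: Sklar's theorem identifies the joint cdf with $\Cop(u_1,u_2)$, and the remaining identities are obtained via inclusion--exclusion, partitioning, and the disjointness of $\Cf^2_+$ and $\Pf^2_+$. If anything, your treatment is slightly more careful, since you make explicit the role of continuity in interchanging strict and non-strict inequalities, which the paper leaves implicit.
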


\begin{proof}[Proof of Lemma~\ref{lem:AllProba}]
By definition, we can write
\begin{align*}
\PP\left(\Cf^1_+\cap\Cf^2_+\right)
 & = \PP\left(\{(S_T^1 - K_1)_{+}  > 0\}\cap\{(S_T^2 - K_2)_{+} > 0\}\right)\nonumber\\
&= 1 - \PP\left(S_T^1 > K_1)\right)
- \PP\left(S_T^2 > K_2)\right)
+ \PP\left(\{S_T^1 < K_1)\}\cap\{(S_T^2 < K_2)\}\right)\nonumber\\
&= 1 - F_1(K_1) -  F_2(K_2) + F_{12}(K_1),K_2)\nonumber\\
&= 1 - u_1 - u_2 + \Cop(u_1,u_2),
\end{align*}
where $F_{12}$ is the joint cdf.
Now,
$$
\PP\left(\Pf^1_+\cap\Pf^2_+\right)
 = \PP\left(\{(K_1 - S_T^1)_{+}  > 0)\}\cap\{(K_2 - S_T^2)_{+} > 0\}\right)
= F_{12}(K_1,K_2)
 = \Cop(u_1,u_2),
$$
and
\begin{align*}
\PP\left(\Cf^1_+\cap \Pf^2_+\right)
 & = \PP\left(\{(S_T^1 - K_1)_{+}  > 0\}\cap \{(K_2 - S_T^2)_{+} > 0\}\right)
\\
 & = \PP\left(K_2 < S_T^2\right)
- \PP\left(\left\{(S_T^1 < K_1\right\} \cap \left\{S_T^2 < K_2\right\}\right)\\
 & = F_2(K_2) - F_{12}\left(K_1,K_2\right)
 = u_2 -  \Cop(u_1, u_2).
\end{align*}
Likewise,
\begin{align*}
\PP\left(\Pf^1_+\cap\Cf^2_+\right)
 & = \PP\left(\{(K_1 - S_T^1)_{+}  > 0\} \cap\{(S_T^2 - K_2)_{+} > 0\}\right)\\
& = \PP\left(K_1 < S_T^1\right)
- \PP\left(\{S_T^2 < K_2\}\cap \{S_T^1 < K_1\}\right)\\
& = F_1(K_1) - F_{12}\left(K_1),K_2\right)
 = u_1 -  \Cop(u_1,u_2),
\end{align*}
and finally, since~$\Cf^2_+$ and~$\Pf^2_+$ are disjoint, 
\begin{align*}
\PP\left(\Cf^2_+\right) + \PP\left(\Pf^2_+\right)  & =\PP\left(S_T^2 > K_2 \right) + \PP\left(K_2 > S_T^2 \right)
= 1 - F_2\left(K_2\right) + F_2(K_2)
 = 1 - u_2 + u_2,
\end{align*}
\end{proof}

Armed with this lemma,
we can compute the conditional probability~\eqref{eq:muFormula} 
for the different European options we consider (Calls, Puts and Strangles):

\begin{proposition}
Let
$(u_1, \widetilde{u}_1, u_2, \widetilde{u}_2)
:=
(F_1(K_1), F_1(\widetilde{K}_1), F_2(K_2), F_2(\widetilde{K}_2))$,
with $K^i<\widetilde{K}^i$ for $i=1,2$.
Then the following equalities hold:
\begin{align*}
\mu\left(\Cf^1,\Cf^2\right)
 & = \frac{1 - u_1 - u_2 + \Cop(u_1,u_2)}{1 - u_2},\\
\mu\left(\Pf^1,\Pf^2\right)
 &  = \frac{\Cop(u_1,u_2)}{u_2},\\
\mu\left(\Sf^1,\Sf^2\right)
 &   =  \frac{1 + u_1 + u_2 - \widetilde{u}_1 - \widetilde{u}_2 + \Cop(u_1,u_2) + \Cop(\widetilde{u}_1, \widetilde{u}_2) - \Cop(u_1, \widetilde{u}_2) - \Cop(\widetilde{u}_2, u_1)}{1 - \widetilde{u}_2 + u_2},\\
\mu\left(\Cf^1,\Sf^2\right)
 &  = \frac{1- \widetilde{u}_1 - \widetilde{u}_2 + \Cop(\widetilde{u}_1,\widetilde{u}_2) + u_2 - \Cop(\widetilde{u}_1,u_2)}{1 - \widetilde{u}_2 + u_2},\\
\mu\left(\Pf^1,\Sf^2\right)
 & = \frac{u_1 - \Cop(u_1,\widetilde{u}_2) + \Cop(u_1,v_1)}{1 - \widetilde{u}_2 + u_2}.
\end{align*}
\end{proposition}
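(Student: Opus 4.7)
The plan is to apply the definition of conditional probability,
$\mu(\Op^1,\Op^2) = \PP(\Op^1_+\cap\Op^2_+)/\PP(\Op^2_+)$,
to each of the five cases, extracting both the numerator and the denominator directly from Lemma~\ref{lem:AllProba}. The pure Call--Call and Put--Put cases require essentially no work: the numerators are given verbatim by \eqref{eq:Ct1pCt2p} and \eqref{eq:P1pP2p}, while the denominators $\PP(\Cf^2_+) = 1 - u_2$ and $\PP(\Pf^2_+) = u_2$ follow immediately from the fact that $F_2$ is the cdf of $S_T^2$.

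The remaining three cases rest on a single structural observation: since the assumption $K_i < \widetilde K_i$ forces the events $\Cft^i_+ = \{S_T^i > \widetilde K_i\}$ and $\Pf^i_+ = \{S_T^i < K_i\}$ to be disjoint, one has $\PP(\Sf^i_+) = \PP(\Cft^i_+) + \PP(\Pf^i_+) = (1-\widetilde u_i) + u_i$, which already delivers the denominator $1 - \widetilde u_2 + u_2$ in the Strangle cases. For the numerator of $\mu(\Sf^1,\Sf^2)$, I would expand
\[
\Sf^1_+ \cap \Sf^2_+
 = \bigl(\Cft^1_+ \cup \Pf^1_+\bigr) \cap \bigl(\Cft^2_+ \cup \Pf^2_+\bigr),
\]
distribute, and then use disjointness to rewrite $\PP(\Sf^1_+ \cap \Sf^2_+)$ as the sum of the four pairwise probabilities $\PP(\Cft^1_+\cap\Cft^2_+)$, $\PP(\Cft^1_+\cap\Pf^2_+)$, $\PP(\Pf^1_+\cap\Cft^2_+)$ and $\PP(\Pf^1_+\cap\Pf^2_+)$. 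Each of these is supplied by Lemma~\ref{lem:AllProba} upon substituting the appropriate strike (a tilded quantity $\widetilde u_i$ whenever the event in question involves the upper-strike call $\Cft^i$, and the plain $u_i$ otherwise). Summation and simplification then give the announced expression. The mixed Call--Strangle and Put--Strangle cases follow the same template, but the decomposition is applied only on the side of $\Sf^2$, producing two (rather than four) sub-probabilities from Lemma~\ref{lem:AllProba}.

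The whole argument is essentially bookkeeping, and no conceptual obstacle arises. The only points deserving care are (i) using the strict inequality $K_i < \widetilde K_i$ to justify the disjoint-union rewriting of $\Sf^i_+$, and (ii) systematically tracking which copula argument is $u_i$ and which is $\widetilde u_i$ within each of the four sub-probabilities of the Strangle--Strangle case, since a single misplaced tilde would propagate through the sum.
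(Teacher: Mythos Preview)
Your proposal is correct and follows essentially the same route as the paper: apply the definition of conditional probability, use disjointness of $\Cft^i_+$ and $\Pf^i_+$ to reduce the Strangle cases to sums of two or four pairwise intersection probabilities, and read each of those off from Lemma~\ref{lem:AllProba} with the appropriate (tilded or untilded) arguments. The only difference is presentational, not mathematical.
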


\begin{remark}
Note that if $u_1=u_2$, then~$\mu(\Cf^1,\Cf^2)$
and~$\mu(\Pf^1,\Pf^2)$
correspond exactly to the upper and lower tail dependence coefficients between~$S_T^1$ and~$S_T^2$
by Proposition~\ref{prop:TailDepCoef}.
\end{remark}

\begin{proof}
The first two equalities are straightforward since 
$$
\mu\left(\Cf^1,\Cf^2\right)
 = \PP\left(\Cf^1_+ \Big| \Cf^2_+ \right)
 = \frac{\PP\left(\Cf^1_+\cap\Cf^2_+\right) }{\PP\left(\Cf^2_+\right)}
$$
and 
$$
\mu\left(\Pf^1,\Pf^2\right)
 = \PP\left(\Pf^1_+ \Big| \Pf^2_+ \right)
 = \frac{\PP\left(\Pf^1_+\cap\Pf^2_+\right) }{\PP\left(\Pf^2_+\right)},
 $$
and Lemma~\ref{lem:AllProba} concludes.
Regarding the Strangles, we can write
$$
\mu(\Sf^1,\Sf^2)
= \PP\left(\Cft^1_+ \cup\Pf^1_+ \Big| \Cft^2_+ \cup\Pf^2_+\right)
= \frac{\PP\left(\{\Cft^1_+ \cup\Pf^1_+\}\cap\{ \Cft^2_+ \cup\Pf^2_+\}\right)}{\PP\left(\Cft^2_+ \cup\Pf^2_+\right)}.
$$
Since a Strangle does not have intersecting strike prices we can split the probabilities into disjoint events, 
i.e the probability of the Call part of the Strangle paying out at the same time as the Put is zero, 
so we can consider them separately:
\begin{align*}    
\mu(\Sf^1,\Sf^2)
 & = \frac{\PP\left(\Cft^1_+ \cap  \left\{\Cft^2_+ \cup \Pf^2_+\right\}\right) 
 + \PP\left(\Pf^1_+ \cap\left\{\Cft^2_+\cup \Pf^2_+\right\}\right)}{\PP\left(\Cft^2_+\cup\Pf^2_+\right)} \\
 & =  \frac{\PP\left(\Cft^1_+\cap\Cft^2_+\right)
+  \PP\left(\Cft^1_+\cap \Pf^2_+\right)
+ \PP\left(\Pf^1_+\cap\Cft^2_+\right)
+ \PP\left(\Pf^1_+\cap\Pf^2_+\right)}{\PP\left(\Cft^2_+\right) + \PP\left(\Pf^2_+\right)},
\end{align*}
and the representation in the proposition follows again by Lemma~\ref{lem:AllProba},
where analogous statements obviously hold
with $\widetilde{u}_1 = F_1(\widetilde{K}_1)$
and 
$\widetilde{u}_2 = F_2(\widetilde{K}_2)$
for~$\Cft^1_+$ and~$\Pft^1_+$
with the restriction 
$K_1< \widetilde{K}_1$
and $K_2<\widetilde{K}_2$.

Now, by definition,
\begin{align*}    
\mu(\Cf^1,\Sf^2)
&= \PP\left(\Cft^1_+ \Big|\left\{\Cft^2_+\cup\Pf^2_+\right\}\right)
= \frac{\PP\left(\Cft^1_+ \cap
\left\{\Cft^2_+\cup\Pf^2_+\right\}\right)}{\PP\left(\Cft^2_+\cup\Pf^2_+\right)}
= \frac{\PP\left(\left\{\Cft^1_+ \cap \Cft^2_+\right\} \cup \left\{\Cft^1_+ \cap \Pf^2_+\right\}\right)}{\PP\left(\Cft^2_+\right) + \PP\left(\Pf^2_+\right)}\\
&= \frac{1- \widetilde{u}_1 - \widetilde{u}_2 + \Cop(\widetilde{u}_1,\widetilde{u}_2) + u_2 - \Cop(\widetilde{u}_1,u_2)}{1 - \widetilde{u}_2 + u_2},
\end{align*}
with the denominator in~\eqref{eq:Ct2pP2p},
$\PP\left(\Cft^1_+\cap \Cft^2_+\right)$
in~\eqref{eq:Ct1pCt2p}
and
$\PP\left(\Cft^1_+\cap\Pf^2_+\right)$
in~\eqref{eq:Ct1pP2p}.
Finally, 
$$
\mu(\Pf^1,\Sf^2)= \PP\left(\Pf^1_+\Big| \left\{\Cft^2_+\cup
\Pf^2_+\right\}\right)
= \frac{\PP\left(\Pf^1_+ \cap \left\{\Cft^2_+\cup
\Pf^2_+\right\}\right)}{\PP\left(\Cft^2_+\cup
\Pf^2_+\right)}
 = \frac{\PP\left(\Pf^1_+ \cap \Cft^2_+\right) + \PP\left(\Pf^1_+ \cap \Pf^2_+\right) }{\PP\left(\Cft^2_+\cup
\Pf^2_+\right)},
$$
and the proposition follows by Lemma~\ref{lem:AllProba}.
\end{proof}

A simple application of the Fr\'echet-Hoeffding bounds~\eqref{eq:FrechetHoeffding} shows that
$\mu(\Cf^1,\Cf^2)$
and 
$\mu(\Pf^1,\Pf^2)$
are bounded between zero and~$1$,
and thus represent genuine probabilities.

\begin{remark}
We will only consider here options with the same maturities, but our framework naturally extends to options maturing at different times. 
\end{remark}

While the closed-form expressions are very attractive, these conditional probabilities are far from ideal. 
Suppose for example that the conditional probability that~$\Op^1$ pays out
given that~$\Op^2$ does equals~$70\%$, one might presume some sort of positive dependence between the two options. 
However if~$\Op^1$ pays out~$90\%$ of the time, 
this would indicate some negative correlation. 
On the other hand, it may be that the conditional probability that~$\Op^1$ pays out, given that~$\Op^2$ does is $20\%$, 
indicating negative dependence. 
However, if~$\Op^1$ normally only pays out $10\%$ of the time, this would mean that it pays out twice as much when~$\Op^2$ pays out. 

\subsection{Dependency matrices}
The above setup leads to a logical conditional dependency measure, which we define in the general and interesting framework of a portfolio consisting of~$n$ options.
We construct the dependency matrix $\LLa \in \RR^{n\times n}$, 
where, for $i,j=1,\ldots, n$, 
\begin{equation}\label{eq:Lambdaij}
\Lambda_{ij} := \frac{\mu(\Op^i,\Op^j)}
{\PP\left(\Op^i\right)}.
\end{equation}
Note that by construction, $\LLa$ is such that
\begin{equation}\label{eq:LambdaIneq}
\Lambda_{ii} = \frac{\mu(\Op^i,\Op^i)}
{\PP\left(\Op^i\right)}
 = \frac{1}{\PP\left(\Op^i\right)}
 \geq \frac{\mu(\Op^i,\Op^j)}
{\PP\left(\Op^i\right)}
 = \Lambda_{ij},
 \qquad\text{for }i,j=1,\ldots, n.
\end{equation}
When~$\Lambda_{ij}<1$, Option~$\Op^i$ pays out less often  when~$\Op^j$ does, 
indicating negative correlation. 
When $\Lambda_{ij}=1$, $\Op^i$ is not affected by the payout from~$\Op^j$, so that they are somehow  independent. 
Similarly a value strictly greater than~$1$ indicates positive correlation between the two options.
Clearly~$\LLa$ is symmetric with non-negative real entries, 
and therefore the spectral theorem~\cite{spectral} 
implies that there exist
a diagonal matrix~$\boldsymbol{D}$ of real eigenvalues of~$\LLa$ 
and an orthogonal matrix~$\Qm$ whose columns are the eigenvectors of~$\LLa$
such that $\LLa = \Qm\boldsymbol{D}\Qm^{\top}$.

\subsubsection{Empirical motivation}

For clarity onwards,
we shall write
an~$x\%$ OTM option for an option being~$x\%$ out of the money;
for example, a~$5\%$ OTM Call on~$S$ has strike $K = 1.05 S_T$,
a~$10\%$ OTM Put on~$S$ has strike $K = 0.9 S_T$, and a $10\%$ either-side OTM Strangle has lower strike $K = 0.9 S_T$ and upper strike $\widetilde{K} = 1.1 S_T$.
Before analysing the conditional dependency matrix in detail, 
consider some empirical evidence. 
In Figure~\ref{fig:DepMat11},
the dependency matrix~$\LLa$ of~$5\%$, 
$30$-day-to-maturity European Calls
uses sample data from the last~$8$ years. Intuitively we expect to see clear dependence by sector, as stocks within sectors normally rise at similar times. 
There are three clear blocks with high dependence, centered around Technologies, Financials and Automobiles, indicating that the payouts between~$5\%$, $30$-day-to-maturity Call options are highly correlated intra-sector as expected. 
Some stocks such as Ford (F) have particularly high values on the diagonal, indicative of the poor recent performance of the stock and the low probability of returns greater than~$5\%$ during a $30$-day period.
Figure~\ref{fig:DepMat12} scales it up
and includes~$5\%$, 30-day-to maturity European Puts and~$2\%$, $30$-day-to-maturity European Strangles. 
There is high dependence between diagonal blocks as expected. 
Additionally, the Strangle has cross dependence with both Puts and Calls, importantly almost always on the same stocks. 
The Call-Put dependence blocks also show the expected dependence structure, 
close to zero for all coefficients indicating negative correlation between the option payouts, 
reflecting that collectively stocks are likely to rise or fall at similar times in bull or bear markets.
\begin{figure}[h!]
\centering
\includegraphics[scale=0.4]{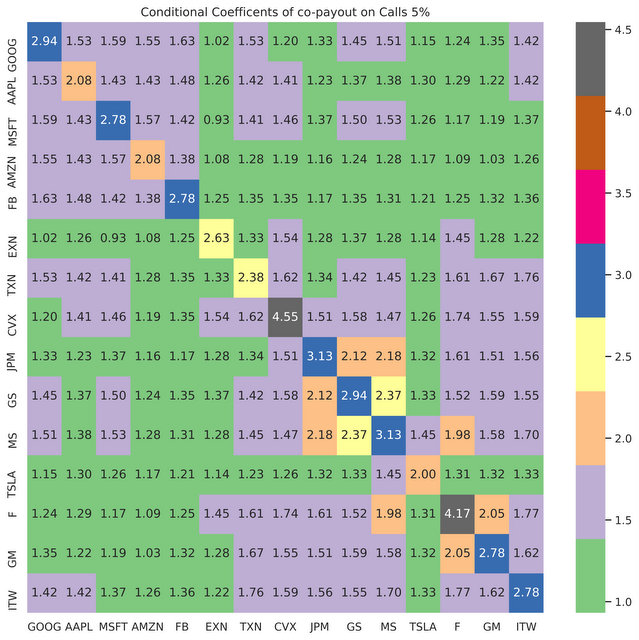}
\caption{Dependency matrix of $5\%$ OTM Call options.
}
\label{fig:DepMat11}
\end{figure}
\begin{figure}[h!]
\centering
\includegraphics[scale=0.45]{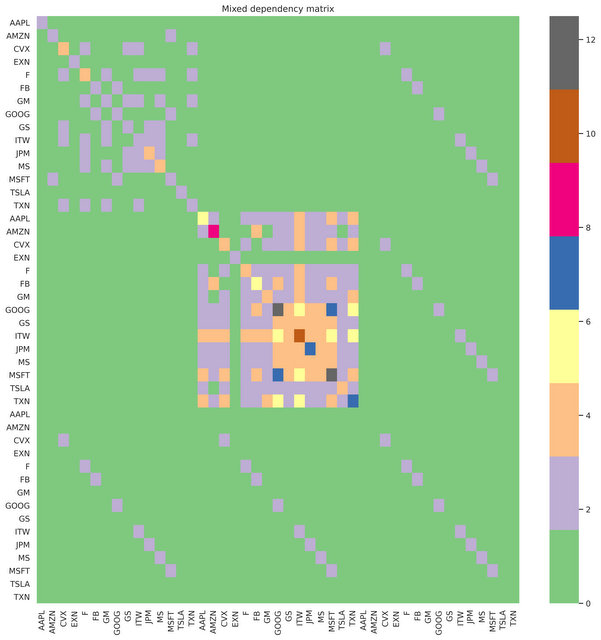}
\caption{Dependency matrix of $5\%$ and $2\%$ OTM Call, Put and Strangle options.}
\label{fig:DepMat12}
\end{figure}

\subsubsection{Properties of $\LLa$}\label{sec:ModCholesky}

\subsubsection*{Positive semi-definiteness}
While the dependency matrix~$\LLa$ is symmetric,
it may not be positive semi-definite
($\xx^{\top} \LLa \xx \geq 0$ 
for all $\xx \in \RR^{n}$), 
as some eigenvalues may be negative.
The Perron–Frobenius theorem~\cite{Perron1907} however ensures that the largest eigenvalue for any irreducible non-negative matrix is greater than zero and all others are bounded by its absolute value.  
Moreover, from~\eqref{eq:LambdaIneq},
Gershgorin's circle theorem~\cite{circle} 
provides bounds for the eigenvalues~$\{\lambda_i\}_{i=1,\ldots,n}$
of the form
$|\lambda_i -\Lambda_{ii}|\leq \sum_{j \neq i}\left|\Lambda_{i j}\right|$
for each $i=1,\ldots,n$.
In the case where~$\LLa$ 
is diagonally dominant, namely
$|\Lambda_{ii}|\geq \sum_{j \ne i}|\Lambda_{ij}|$
for each $i=1,\ldots,n$, 
then it has positive eigenvalue, 
and hence is positive semi-definite.
While we have found this to hold empirically,
there is no guarantee in principle that this will always be true,
and we show how to adjust~$\LLa$ 
to account for this possibility.

\subsubsection*{Modified Cholesky}
If~$\LLa$ is not positive semi-definite, 
quadratic programming with even a single negative eigenvalue is known to be NP-hard~\cite{NP},
and therefore rather cumbersome from a numerical point of view.
We use instead a modified Cholesky algorithm to modify~$\LLa$ slightly to make it positive semi-definite.
We consider the set of diagonal matrices $\mathcal{D}$ and 
following Higham and Cheng~\cite{MCA_hig},
for  a small constant $\delta \geq 0$,
introduce
$$
\eta(\LLa, \delta)_F
:= \min_{\Eb\in\mathcal{D}}\Big\{\|\Eb\|: \lambda_{\min }(\LLa+\Eb) \geq \delta\Big\},
$$
for any matrix norm~$\|\cdot\|$.
In the particular case of the Frobenius norm, it can be shown that~\cite[Theorem 3.1]{MCA_hig}
$$
\eta(\LLa, \delta)_{F}
=\left(\sum_{\lambda_{i}<\delta}\left(\delta-\lambda_{i}\right)^{2}\right)^{1/2},
$$
and the optimal perturbation matrix~$\Eb$ takes the form
\begin{equation}
\Eb=\Qm\, \mathrm{Diag}(\taub)\,\Qm^{\top}, 
\qquad \text{with}\qquad
\tau_{i} := 
\left\{
\begin{array}{ll}
0, & \text{if }\lambda_{i} \geq \delta, \\ 
\delta-\lambda_{i}, & \text{if }\lambda_{i}<\delta.
\end{array}
\right.
\end{equation}

\begin{remark}
Note that setting $\delta>0$ from the beginning 
will return a positive definite modified matrix,
which then admits an inverse.
\end{remark}
\section{Portfolio Optimisation}\label{sec:PtfOptim}

We now tackle the core of the paper, showing how the dependency matrix becomes a key tool for options portfolio optimisation.
For notational clarity, we shall from now on assume that the dependency matrix~$\LLa$
is positive semi-definite,
either by construction or after applying the modified Cholesky algorithm from Section~\ref{sec:ModCholesky}.

\subsection{Investment procedure}
We consider a trader who invests one unit of capital in a portfolio of options every~$d$ days. 
She can buy from a total of~$n$ different options, 
written on $k\leq n$ underlying stocks with a maturity of~$d$ trading days, which she then holds to maturity. The options traded in this market include various combinations of Calls, Puts, Strangles and Straddles. If at maturity the options have a positive payout,
the trader cashes them in, 
otherwise she loses all the capital invested in that option. 
The returns~$R_i$ on an option~$\Op^i$ can be characterised as
$$
R_i := \frac{\text{Payout}(\Op^i)  - \text{Price}(\Op^i)}{\text{Price}(\Op^i)},
$$
where $\text{Price}(\Op^i)$ denotes the price of~$\Op^i$ at inception of the contract,
and the weight of  each option~$\Op^i$ in the portfolio is defined as
$$
w_i := \frac{\text{Number of }\Op^i \text{ bought}}{\text{Price}(\Op^i)}.
$$
This is of course an ideal setup,
as in practice options contracts traded on S$\&$P 500 stocks are traded in multiples of~$100$,
but this is flexible enough to be updated for practical scenarii.
The returns~$R^{\pi}$ of the portfolio over the~$d$-day period is therefore
$$
R^{\pi} := \ww^{\top}\Rb = \sum_{i=1}^{n} w_i R_i,
$$
where $\ww = (w_1, \ldots, w_n)$
and $\Rb = (R_1,\ldots, R_n)$. 
Note that this is an `all-or-nothing strategy'
and the trader can potentially lose all her initial capital if none of the options pays out at maturity. 
This inherent risk in investing in pure option portfolios
(especially in deep out-of-the-money options)
differs greatly from investing in stocks where
it is highly unlikely that all the stocks become worthless.

\begin{remark}
We only consider portfolios with naked options rather than with options and underlying assets.
While this may seem like a risky portfolio (as discussed later), one could construct the corresponding delta-hedged portfolio consisting of assets only,
where the amount of each asset is the sum of all the options on this asset, weighted by their deltas.
\end{remark}

\subsection{Objective function}

The goal is to maximise the portfolio returns 
for a given level of risk. 
As discussed in the introduction, 
Markowitz' portfolio theory uses the returns covariance as a measure of the risk, 
which is a poor metric here because of the highly asymmetric returns distributions. 
We consider instead the minimisation problem
$$
\min_{\{\ww: \ww^\top\UnitVec = 1\}}\Big\{ -\ww^\top \EE[\Rb] + \alpha \ww^\top\LLa\ww\Big\},
$$
for some risk aversion parameter~$\alpha\geq 0$.
The intuition here is that the modified dependency matrix~$\LLa$ has high diagonal terms when option payout probabilities are historically low, 
which means that weights with low probability of positive payout are penalised. 
While off-diagonal terms are high if correlation between positive payouts on options are high, 
thus penalising weights with high correlation of payouts, hopefully leading to a more diverse portfolio.
Since the matrix~$\LLa$ is positive semi-definite, the objective function is convex and can be be solved explicitly using Lagrange multipliers. 
The Lagrangian $\mathcal{L}$ reads
$$
\mathcal{L}(\ww,\gamma)
:= -\ww^\top \EE[\Rb] + \alpha \ww^\top\LLa\ww + \gamma \left(\ww^\top \UnitVec - 1\right).
$$
Equating the gradients~$\nabla_{\ww}$ and~$\partial_{\gamma}$ to zero yields the system
\begin{equation*}
\left\{
\begin{array}{rl}
\ZeroVec &= -\EE[\Rb] + 2\alpha\LLa\ww + \gamma \UnitVec,\\
0 &= \ww^\top \UnitVec,
\end{array}
\right.
\end{equation*}
which can be solved explicitly as
\begin{equation}\label{eq:SolutionOptim}
\gamma = \frac{2\alpha - \UnitVec^\top \LLa^{-1}\EE[\Rb] }{\UnitVec^\top\LLa^{-1}\UnitVec}
\qquad\text{and}\qquad
\ww = \frac{\LLa^{-1}}{2\alpha} \left(\EE[\Rb] - \UnitVec \frac{2\alpha -  \UnitVec^\top \LLa^{-1}\EE[\Rb]}{\UnitVec^\top\LLa^{-1}\UnitVec} \right).
\end{equation}

\begin{remark}
The solution~\eqref{eq:SolutionOptim} 
assumes that the matrix~$\LLa$ is invertible.
The modified Cholesky Algorithm can ensure invertibility setting $\delta>0$, 
or otherwise we can rely instead on the Moore-Penrose inverse~\cite{moore, pen}.
\end{remark}

\subsection{Optimisation with box constraints}

The framework developed above has several limitations:
the quadratic nature of the risk aversion
does not distinguish between selling and buying, and hence is not penalising selling options with high probability of payouts. 
To palliate this issue, we impose short-selling constraints, thus reducing the risk, as the trader is no longer exposed to the unlimited downside created from selling naked options.  
The trader (because of regulations for example) 
may also wish to impose diversification constraints, either imposing some non-zero weights or limiting individual amounts;
she may also have target proportions of the portfolio in certain categories of assets.
The new objective function thus takes the form
\begin{equation}\label{eq:minModif}
\min_{w_i}\left\{\ww^\top \EE[\Rb] - \alpha\ww^\top\LLa\ww:
\ww^T\UnitVec = 1,
\sum^{m}_{i = 1} w_i = x,
l_{i} \leq w_i \leq u_{i}
\text{ for }i=1, \ldots, n\right\}.
\end{equation}

Since~$\LLa$ is positive semi-definite, 
the problem~\eqref{eq:minModif} is convex quadratic with box constraints. 
Several classes of algorithms exist
to solve such problems, such as Sequential Quadratic Programming~\cite{SQP} 
or Augmented Lagrangian Methods~\cite{birgin2014practical}. 
We choose the former, a sequential linear-quadratic programming (SLSQP)~\cite{Schittkowski1982}, 
which is fast and handles well those inequality constraints. 
We refer the interested reader to~\cite{boggs}
for an excellent overview.

\section{Results}\label{sec:Results}
We now perform out-of-sample backtests over the last nine years using a variety of OTM options traded on $15$ S$\&$P 500 stocks, 
assuming all options are held to maturity. 
Weights are selected by optimising~\eqref{eq:minModif} and we also consider different levels of risk aversion~$\alpha$ to assess their impact on portfolio  performance. 
Performance is measured using the Sharpe ratio, the CRRA utility function,
as well as with moments of the returns distribution. As a comparison, the performance of these portfolios will be benchmarked against an equally weighted portfolio.

\subsection{Portfolio construction}
We consider three trading scenarios: 
\begin{itemize}
\item $5\%$ OTM Call options only;
\item $10\%$ OTM Call options;
\item a~$5\%$ OTM Call, a~$5\%$ OTM Put and a~$10\%$ either-side Strangle.
\end{itemize}

All options are bought with one month to maturity 
and traded on all $15$~S$\&$P 500 stocks, 
selected from a diverse range of industries. 
Note that the price of the $10\%$ either side Strangle can be calculated as the sum of $10\%$ OTM Call and Puts. 
This choice of options and stocks 
is dictated by liquidity considerations,
so that we likely avoid absence of prices for some options on some days. 
We assume that options can be bought in any quantity, even fractional. 
While this is not completely true,
as options are usually bought in groups,
this should not deter the reader from the main message.

We build our trading strategy over~$N$ one-month periods 
$[t_i, t_{i+1}]$
for $i=0, \ldots, N-1$.
At the beginning~$t_i$ of each period
we compute the  dependency matrix~$\LLa_{t_i}$ 
(and its modified Cholesky version)
with the best of three Archimedean copulas (Frank, Gumbel and Clayton)
as well as the expected returns vector~$\EE_{t_i}[\Rb]$,
using historic rolling sampling over the past eight years.
At time~$t_i$ we optimise~\eqref{eq:minModif} with $\LLa_{t_i}$ and $\EE_{t_i}[\Rb]$ 
for a selected value of the risk parameter~$\alpha$ to obtain the optimal weights~$\ww_{t_i}$. 
At time~$t_{i+1}$, we then compute 
the returns of the portfolio $R^{\pi}_{t_i,t_{i+1}}$ over the elapsed period $[t_i, t_{i+1}]$. 
The total average returns over the~$N$ periods is therefore
$$
\overline{R}^{\pi} =
\frac{1}{N}\sum_{i=0}^{N-1}R^{\pi}_{t_i,t_{i+1}},
$$
where we assume that we invest the same amount of capital at the start of each period as a normalisation.
This analysis is inherently out of sample, only using data from before $t_i$ at each timestep to create the dependency matrix $\LLa_{t_i}$. Moreover as the window is rolling, we are able to encapsulate changes in the dynamics between various different stocks.

We evaluate the performance of the portfolio using the Sharpe ratio~\cite{sharpe},  
$\mathrm{SR}:=\frac{1}{\sigma}(\overline{R}^{\pi}-R_{f})$,
a standard metric in portfolio management.
Here, $\sigma$ represents the standard deviation of returns 
($\sigma^2 = \sum_{i=0}^{N-1}\sigma_{t_i,t_{i+1}}^2$)
and where~$R_{f}$ is the risk-free rate. In particular, the Sharpe ratio is annualised using monthly returns.  We also compute the CRRA utility function given by 
$u(c) = \frac{c^{1-\eta}-1}{1-\eta}$ with $\eta=1.5$

As we have a no short-selling constraint, 
we are only buying OTM options and are therefore long volatility. We stand to make a lot of money when the market crashes or rises, but in stable markets will steadily lose money. For example, holding a long option portfolio in Spring 2020 would have been hugely profitable as the market crashed and then rebounded. Therefore we expect to see returns with high positive skew and excess kurtosis, with the majority of returns being negative.

\subsection{Data}
We have at our disposal options data from February 2005 to February 2022. 
We select OTM Call and Put options with a maturity of $20$ business days. 
We test the strategy from February 2013 to February 2022, since the initial rolling window requires samples from February~2005.
This period contains a wide variety of market conditions, including highly volatile periods such as the Covid-related market crash in Spring~2020, 
where the S$\&$P 500 fell over~$9.5\%$ on March 12 alone.
We only trade options with monthly expirations, usually expiring on the third Friday of each month~\cite{Faias2017OptimalOP}.
The only data required for the backtest are the one-month maturity option smiles
and their respective stock prices at maturity.
The vast majority of S$\&$P 500 options on individual stocks are American and not European.
While their prices are slightly higher than 
their European counterparts, 
we treat them as holding the contracts to expiry. 
We choose to trade specific S$\&$P 500 stocks that were listed continuously throughout the sampling and testing periods,
excluding for example recently listed stocks (Tesla, Facebook).
The~$15$ liquidly traded stocks are GOOG, AAPL, MSFT, AMZN, OXY, XOM, HAL, CVX, JPM, GS, MS, JNJ, ABT, F, ITW,
distributed between Technology, Financials, Automobiles, Pharmaceuticals and Energy. 
As we reinvest monthly in new options we should give consideration to transaction costs. 
We have assumed transaction costs on these options to be negligible as these are all liquid options; 
in practice, they are small enough not to affect the shape of our results.

\subsection{Call option portfolios}
We first concentrate on portfolios consisting of $5\%$ and $10\%$ OTM Calls.  
We expect to reduce the risk as we increase the risk aversion parameter~$\alpha$, 
with high dependence being more penalised and weights more evenly distributed between sectors and stocks.
To test this we run backtests on $\alpha \in\{0, 10\}$ 
and observe the behaviour of the optimal portfolio weights over time.
We first study the effect of the dependency matrix on the optimal distribution of the weights. 
When $\alpha = 0$, the optimisation problem~\eqref{eq:minModif} is merely maximising expected returns, without taking into account dependencies between the options. 
In this case the portfolio composition is rather static, as we observe in most cases only two symbols rebalancing the same month in Figure~\ref{fig:CallWeightsAlpha0}. 
In particular, the $10\%$ Call optimal weights are almost constant over the whole backtest horizon. 
The historically top performing stocks are given the maximum amount of capital while the rest are given minimal allocation. 
As we increase~$\alpha$, the optimisation penalises dependence and low payout probabilities such that the portfolio diversifies into presumably less dependent Call options.
This is well captured in Figure~\ref{fig:CallWeightsAlpha10}
where $\alpha=10$ creates a more dynamic portfolio.
When computing dependency coefficients, there are few instances where the denominator~$1-u_j$ is null and the coefficients diverge $\Lambda_{ij} = +\infty$. This occurs when a stock has never seen a~$10\%$ or more rise in value in a single month during the $8$-year rolling sampling window, so that such return is close to the 100th percentile of the distribution. One solution is to set a maximum for the percentiles~$u_i$ and~$u_j$, or the dependency coefficient themselves. 
We leave refinements of this issue to further investigations.

\begin{figure}[h!]
\centering
\includegraphics[scale=0.22]{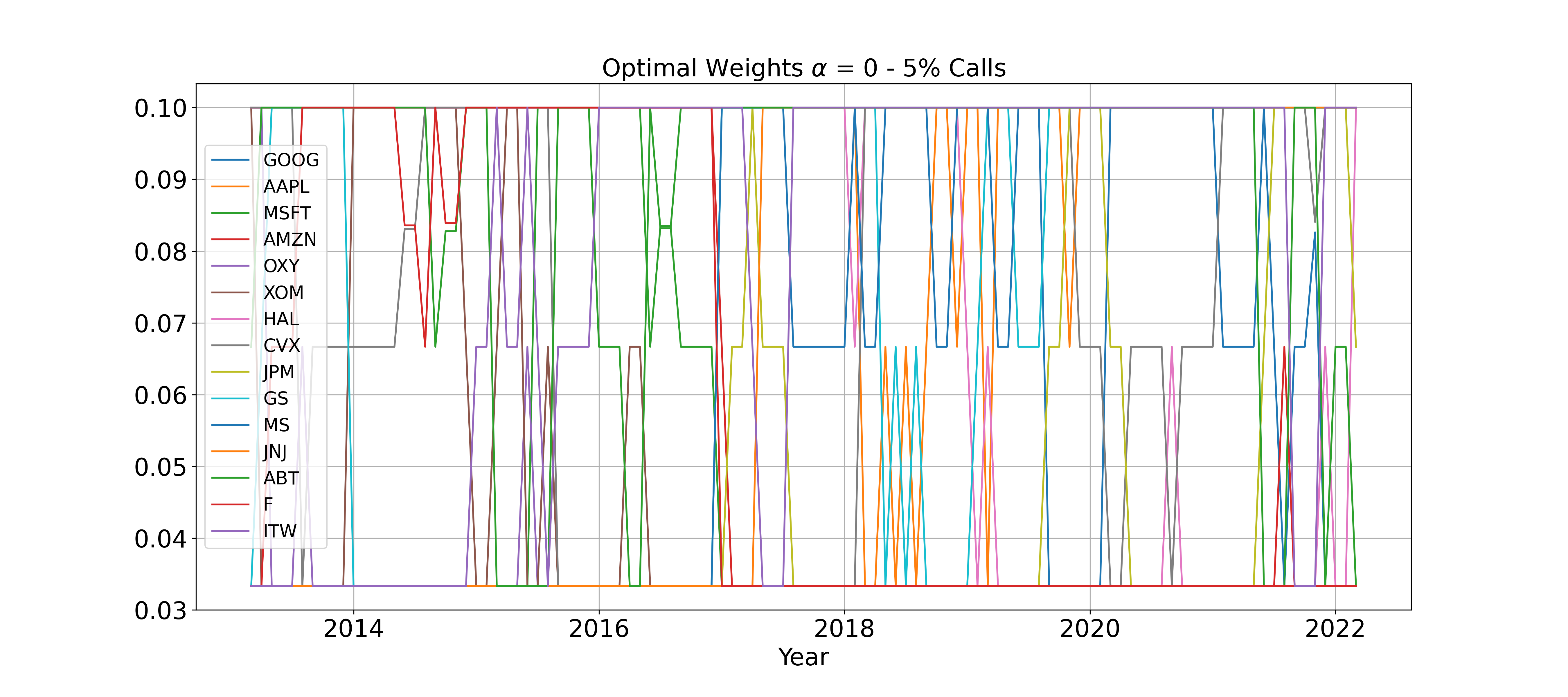}
\includegraphics[scale=0.22]{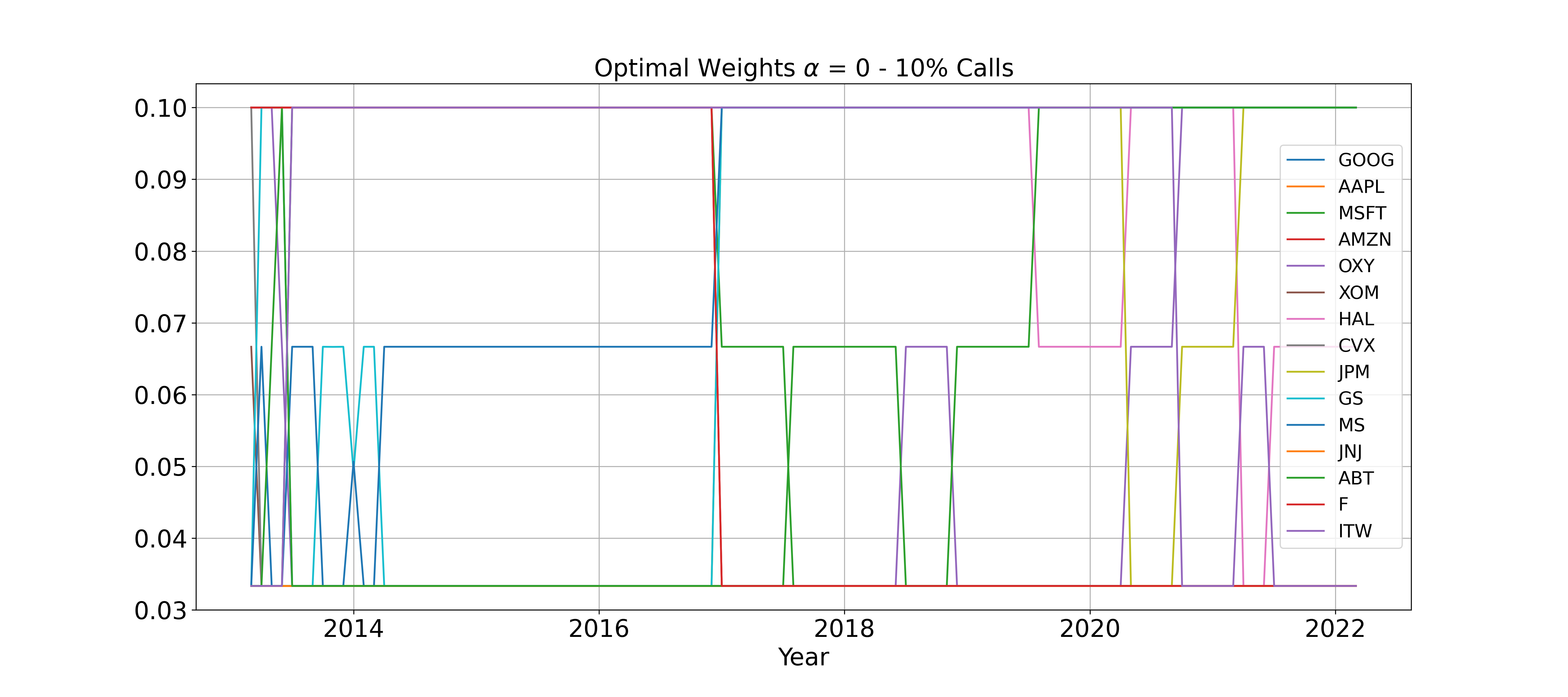}
\caption{$5\%$ and $10\%$ Calls weights allocation with $\alpha = 0$.}
\label{fig:CallWeightsAlpha0}
\end{figure}

\begin{figure}[h!]
\centering
\includegraphics[scale=0.22]{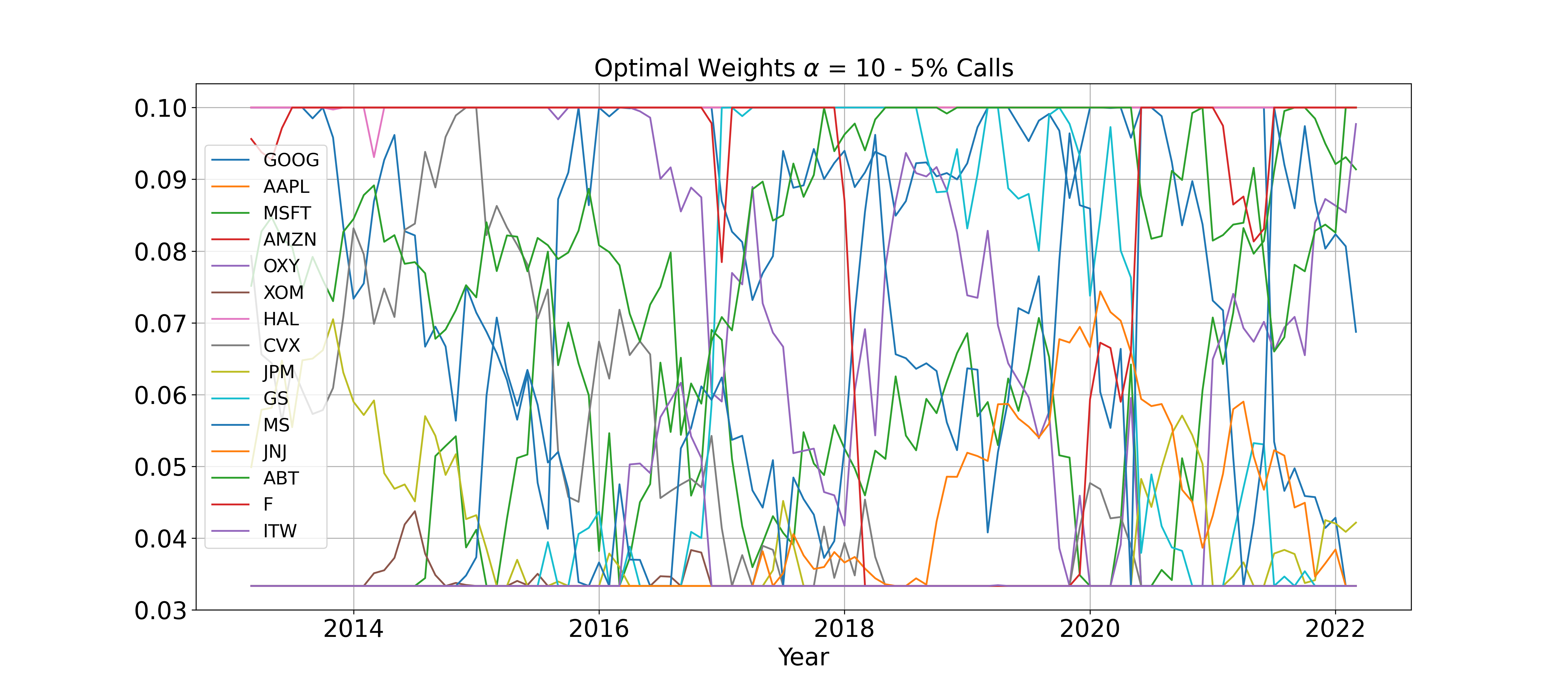}
\includegraphics[scale=0.22]{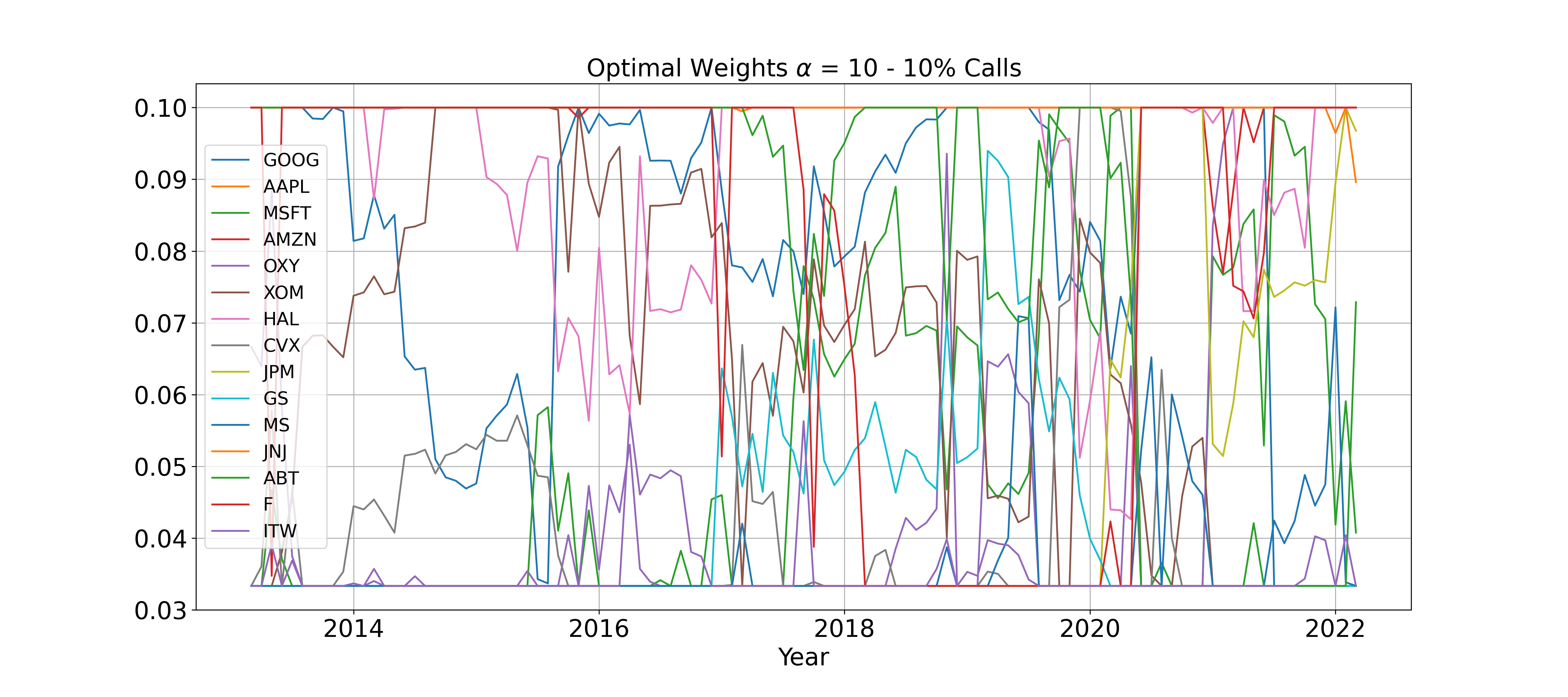}
\caption{$5\%$ and $10\%$ Calls weights allocation with $\alpha = 10$.}
\label{fig:CallWeightsAlpha10}
\end{figure}

We now look at the portfolios' performances for different values of $\alpha \in\{0, 2, 5, 10\}$. 
We compare them with the Equally Weighted Portfolio (EWP) of Calls, as well as the EWP of underlying stocks (EWPu).
Since we only consider one-month OTM Call options, their likelihood of paying out is small. We then expect to see, even with a very well selected portfolio, many months where returns are equal to~$-100\%$. Conversely deep OTM options are very cheap 
and thus promise large returns if stocks rise significantly over~$10\%$. 
This would yield returns distribution with positive skew and high kurtosis.
Figure~\ref{fig:Call OTM PnL Hist Raw} shows the monthly P\&L distribution for $5\%$ and $10\%$ Calls portfolios. The distribution is indeed right skewed with a Dirac mass at~$-100\%$, which represents the months were all Call options expired worthless. In particular, we observe that this mass is much lower for optimised portfolios compared to the EWP for $10\%$ Calls. It is not the case for $5\%$ Calls, even though the density between $-50\%$ and $0\%$ P\&L is much lower for optimised portfolios compared to the EWP. All distributions are fat tailed with a few
points at more than $1500\%$ P\&L. Given how deep out-of-the-money the considered Call options are, the strategy is indeed highly risky with potentially extreme returns.

\begin{figure}[h!]
\centering
\includegraphics[scale=0.22]{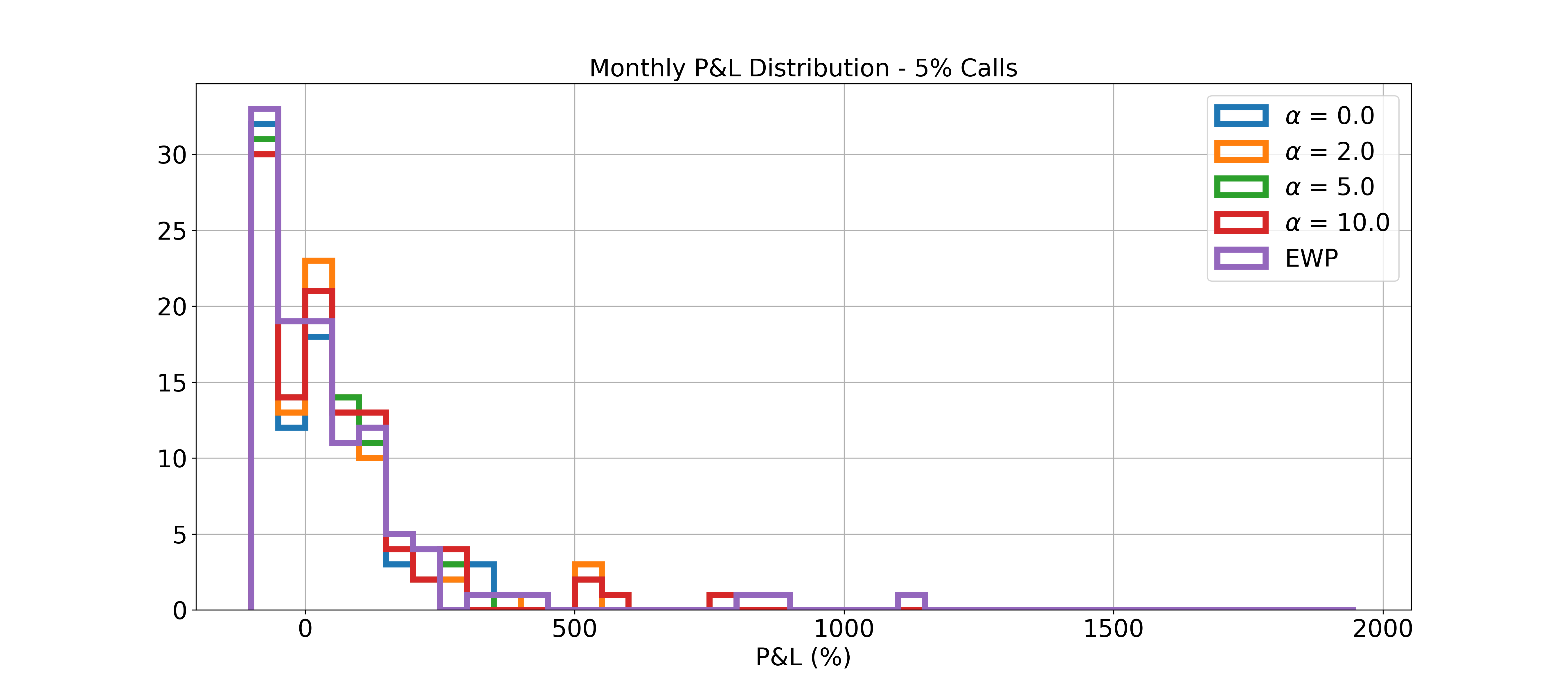}
\includegraphics[scale=0.22]{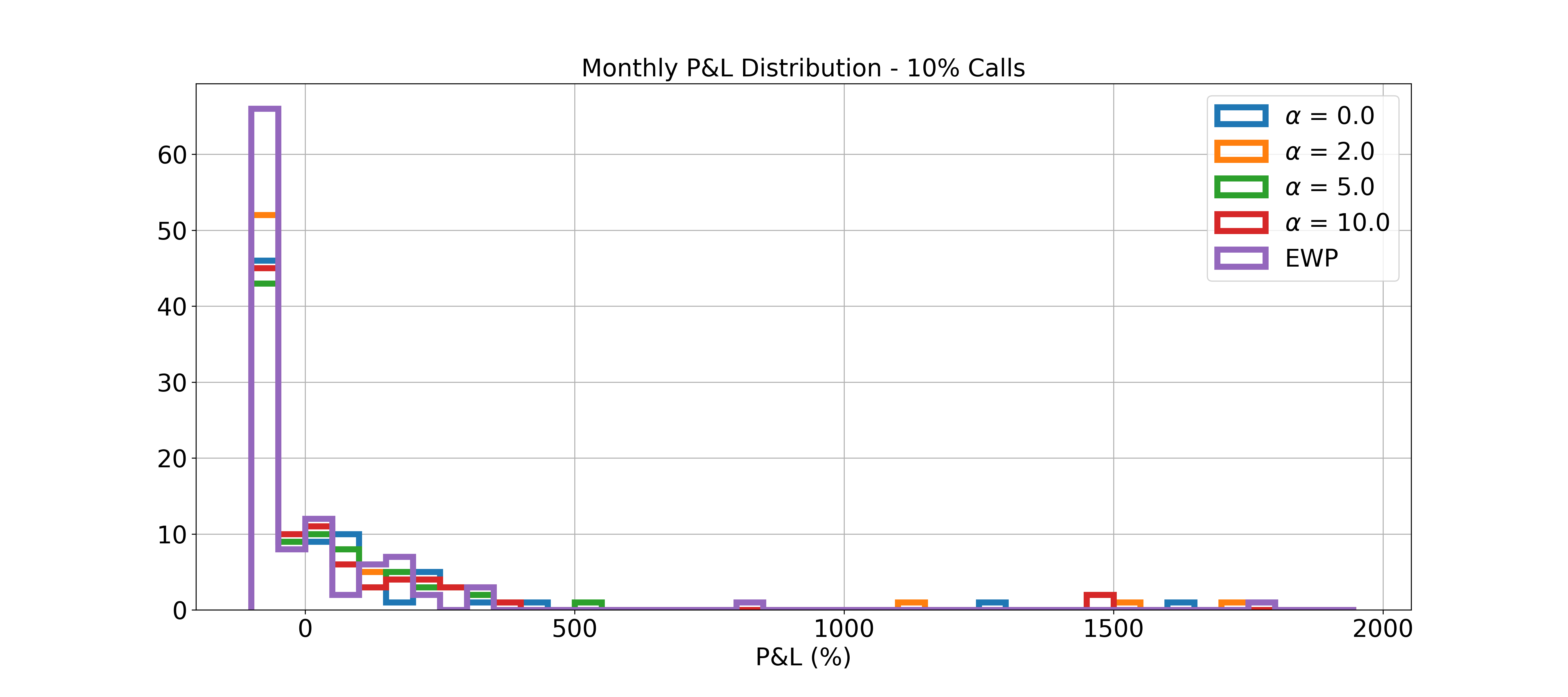}
\caption{Monthly P\&L distribution; $5\%$ and $10\%$ Calls portfolio, EWP Calls portfolio and EWP Underlying Equity portfolio.}
\label{fig:Call OTM PnL Hist Raw}
\end{figure}

Accounting for the leveraged nature of OTM options, we choose to invest a 1/12th of our total capital to buy our options portfolio every month.
Figure~\ref{fig:Call OTM PnL Cumul Raw} compares the cumulative P\&L of portfolios over the backtest horizon. For $5\%$ Calls, the EWP seems to outperform other portfolios starting from early~2017. The optimised portfolios seem to have the same performance, except for late 2021 where the portfolio with risk aversion $\alpha=0$ falls behind. The trend is rather flat between~2013 to~2016 and starts going steadily upwards from~2017. 
The overall strategy seems profitable and beats the EWPu.
Regarding $10\%$ Calls, we observe three significant jumps in early~2013, mid~2015 and late~2016. 
The portfolios' values skyrocket for these single expirations. Otherwise, the trend seems to go consistently downwards. 
The portfolios with $\alpha=5, 10$ and the EWP barely break even. 
For $\alpha=0, 2$, the final P\&L is better but is still lower than the EWPu.

\begin{figure}[h!]
\centering
\includegraphics[scale=0.22]{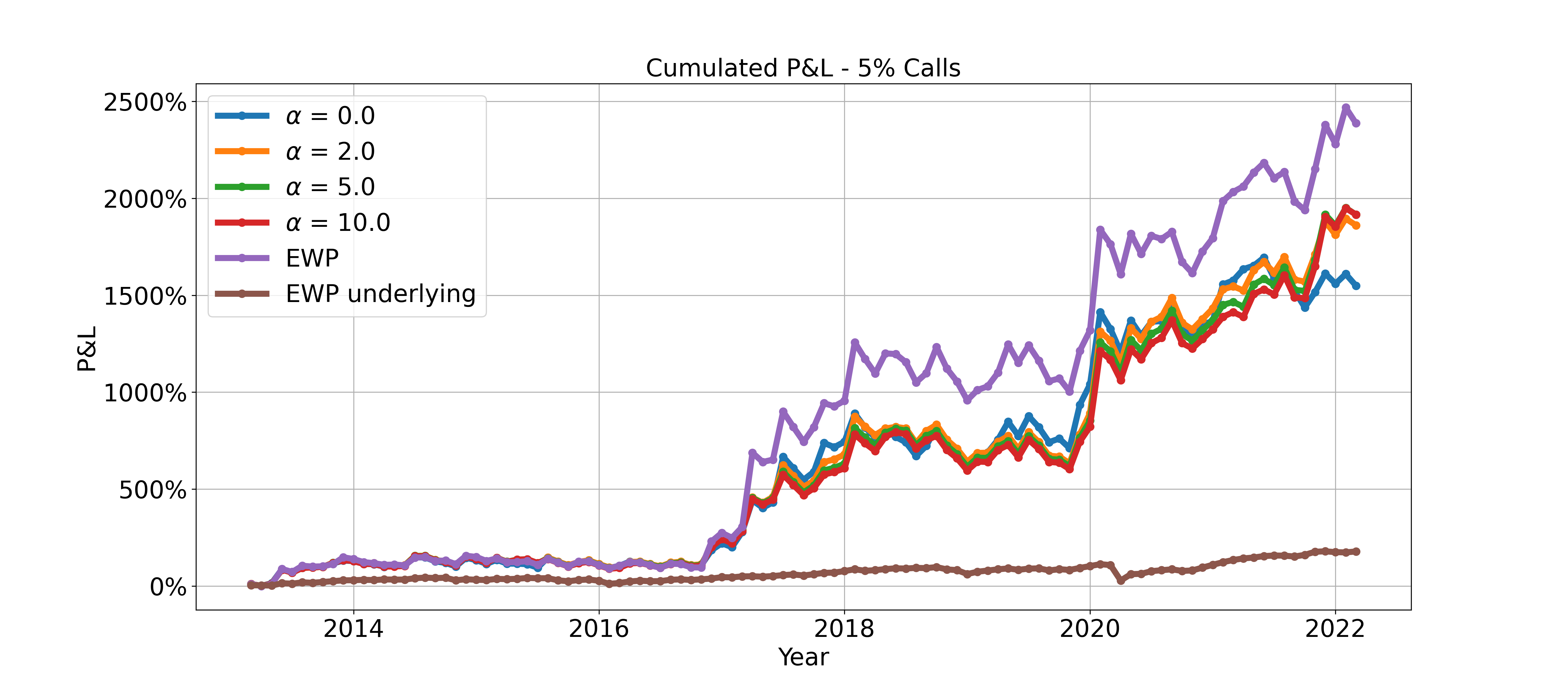}
\includegraphics[scale=0.22]{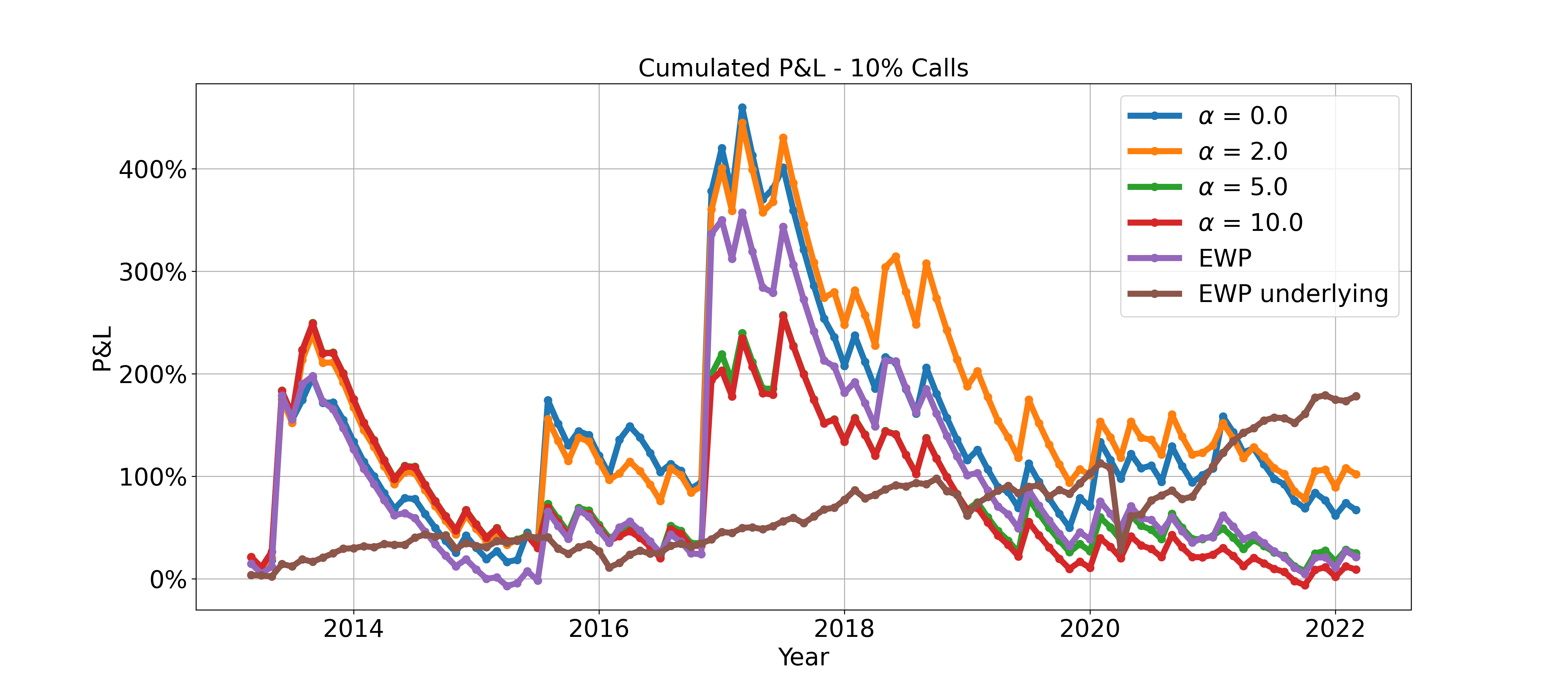}-
\caption{Monthly cumulative P\&L; $5\%$ and $10\%$ Calls portfolio, EWP Calls portfolio and EWP Underlying Equity portfolio.}
\label{fig:Call OTM PnL Cumul Raw}
\end{figure}

Tables~\ref{tab:Stats Calls 5 Raw} and~\ref{tab:Stats Calls 10 Raw} summarise some statistics related to the performance of the~$5\%$ and the~$10\%$ Calls portfolios. 
In particular, we look at the Sharpe Ratios, as well as the CRRA utility function.
As expected, all portfolios have high kurtosis and skew, with low Sharpe ratios. The EWP portfolios sensibly have the highest values of skewness and kurtosis in both the~$5\%$ and~$10\%$ cases, compared to the optimised portfolios. In the case of~$5\%$ Calls, the EWP's P\&L seems to outperform. 
The Sharpe ratios and CRRA values are similar across all portfolios. As for the~$10\%$ Calls, the portfolio with risk aversion $\alpha=2$ has the best P\&L and CRRA value. 
All portfolios have similar Sharpe Ratios, but EWP has the lowest CRRA value.

\begin{table}[h!]
\begin{tabular}{|c|c|c|c|c|c|} 
\hline
Statistics  & $\alpha = 0$ &$\alpha = 2$&$\alpha = 5$&$\alpha = 10$& EWP\\
 \hline
Total return ($\%$)            & 1550 & 1860  & 1920 & 1920 & 2390 \\
 Average monthly return ($\%$) &  39.4 & 40.9  & 41.2 & 41.2 & 47.2   \\
 Skewness                      & 1.94  & 2.11  & 2.16  & 2.18  & 3.26  \\
 Kurtosis                      &  4.83 & 6.09  & 6.47 & 6.61 & 13.5   \\
 Annualised Sharpe Ratio                  &  0.90 & 0.95  & 0.96 & 0.96 & 0.86  \\
 CRRA (1e-2)                   &  2.23 & 2.43  & 2.45 & 2.46 & 2.54   \\
 \hline
\end{tabular}
\caption{$5\%$ Call portfolio Statistics summary.}
\label{tab:Stats Calls 5 Raw}
\end{table}

\begin{table}[h!]
\begin{tabular}{|c|c|c|c|c|c|} 
\hline
Statistics  & $\alpha = 0$ &$\alpha = 2$&$\alpha = 5$&$\alpha = 10$& EWP\\
 \hline
Total return ($\%$)            & 67 & 102 & 25 & 9 & 21\\
 Average monthly return ($\%$) & 26 & 27 & 18 & 16 & 29  \\
 Skewness                      & 4.55 & 4.50 & 4.46 & 4.56 & 6.39  \\
 Kurtosis                      & 22.6 & 22.7 & 23.9 & 25.0 & 45.8  \\
 Annualised Sharpe Ratio                  & 0.32 & 0.34 & 0.26 & 0.24 & 0.28  \\
 CRRA (1e-4)                   & -15 & -5 & -30 & -41 & -53  \\
 \hline
\end{tabular}
\caption{$10\%$ Call portfolio Statistics summary.}
\label{tab:Stats Calls 10 Raw}
\end{table}

As noted earlier, returns are most often negative and the P\&L generally trends downwards,
except for some months where large returns push the whole portfolio returns up. Especially, the $10\%$ Calls portfolio relies on a huge swings to break even. 
Figure~\ref{fig:CallOTMSectorSelect} shows the EWP's P\&L breakdown by sector for selected expiration dates where returns are extremely high. These are 2016-11, 2017-03, 2017-06 for $5\%$ Calls and 2013-05, 2013-07, 2013-08, 2015-07, 2016-11 for $10\%$ Calls. We observe that for a few periods, the P\&L comes from a unique sector, namely Financials for 2016-11 with a tremendous $3000\%$ return on allocated capital, Health care for 2017-03 and 2017-06, and Technology for 2015-07. Even within the same sector, symbols appear to have uneven contribution. A slightly different choice of weights on symbols significantly impacts the overall portfolio performance. For this reason, we decide to discard these periods as they lack relevance for our analysis. The dates 2013-05, 2013-07, 2013-08 are kept since the P\&L contribution comes from multiple sectors and is fairly balanced.

\begin{figure}[h!]
\centering
\includegraphics[scale=0.22]{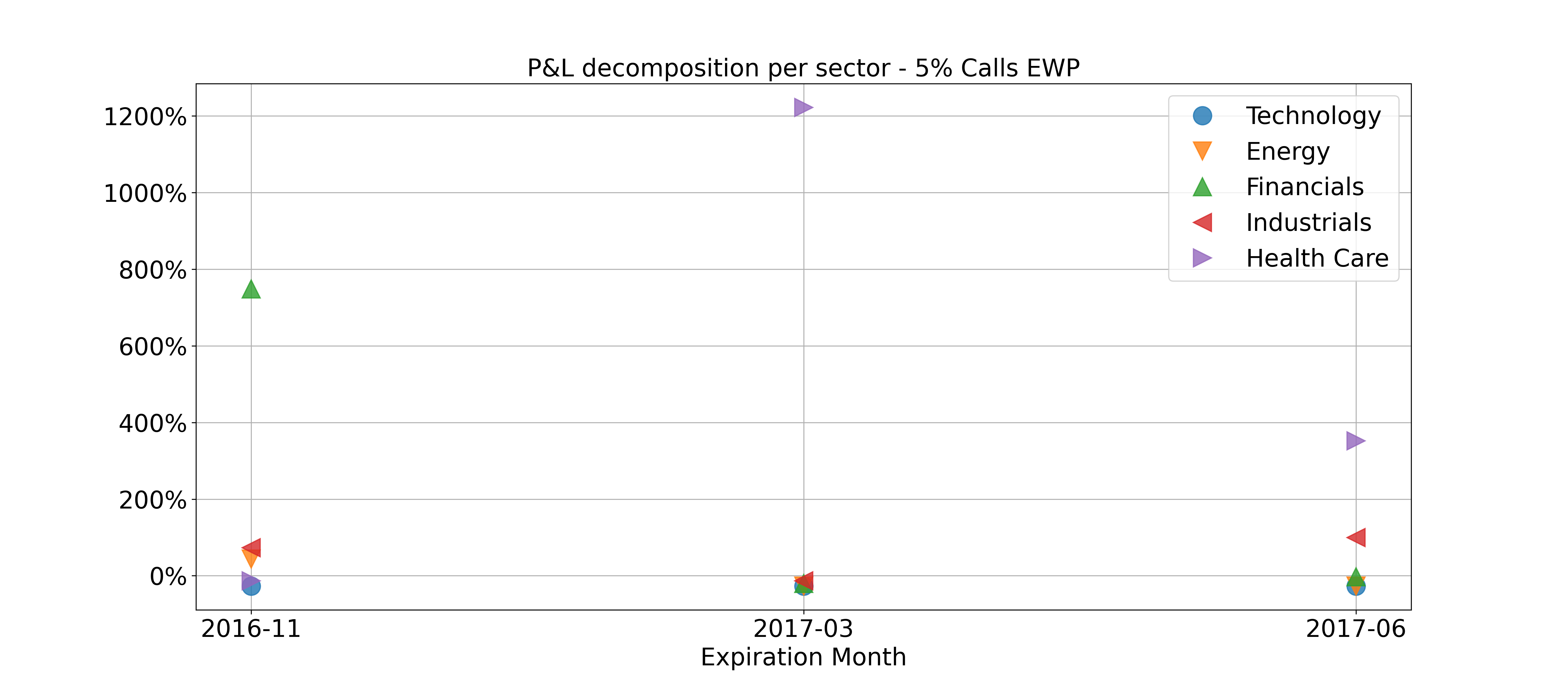}
\includegraphics[scale=0.22]{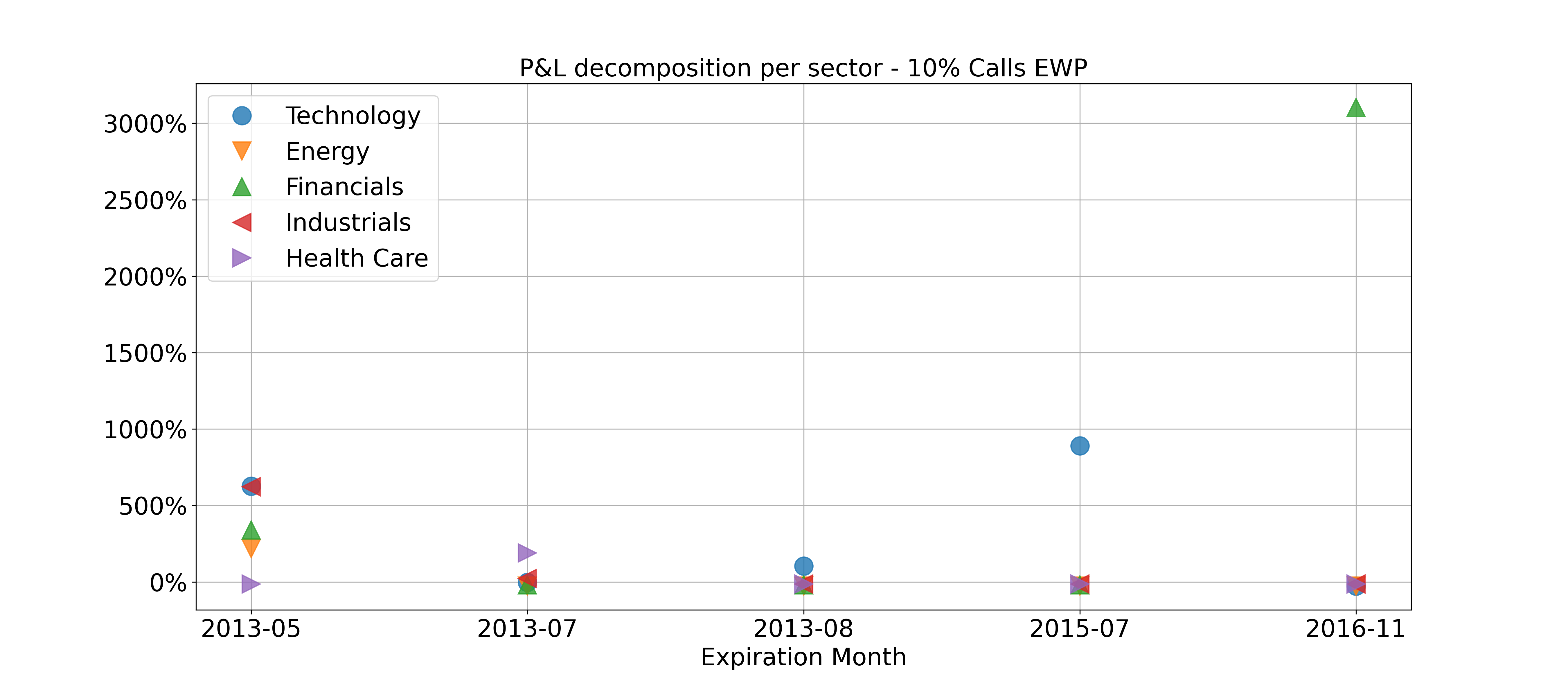}
\caption{$5\%$ and $10\%$ Call EWP's P\&L breakdown by sector.}
\label{fig:CallOTMSectorSelect}
\end{figure}

We obtain the updated P\&L graphs in Figure~\ref{fig:Call OTM PnL Graphs Modified}. A few points in the right tails of the P\&L distributions are missing. Similarly, the upward jumps in the cumulative portfolio P\&L graphs are removed. In both the $5\%$ and $10\%$ Calls portfolios, the EWP now underperforms, followed by the the optimal portfolio with $\alpha=0$. The risk aversion values of $\alpha=5, 10$ have the best performance for $5\%$ Calls while the values $\alpha=2, 5$ outperform for $10\%$ Calls. Thus, in most cases, our diversification approach seems to achieve its purpose, mainly by shrinking the number of occurrences of all options expiring worthless at the same time and therefore reducing the mass at $-100\%$ of the P\&L distribution. However, when a large uptrend happens in a sector, the EWP outperforms our diversified portfolios because it catches the astronomical P\&L of individual stocks.

\begin{figure}[h!]
\centering
\includegraphics[scale=0.22]{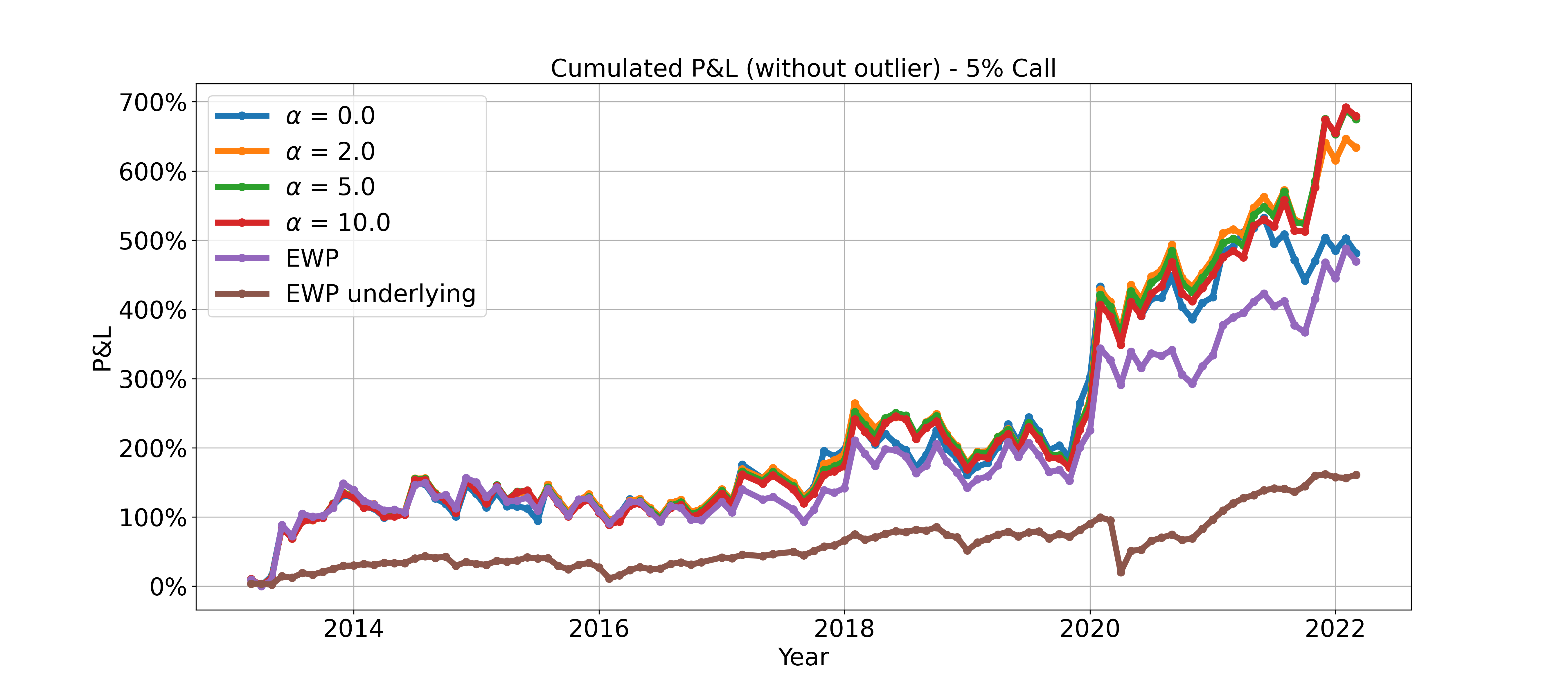}
\includegraphics[scale=0.22]{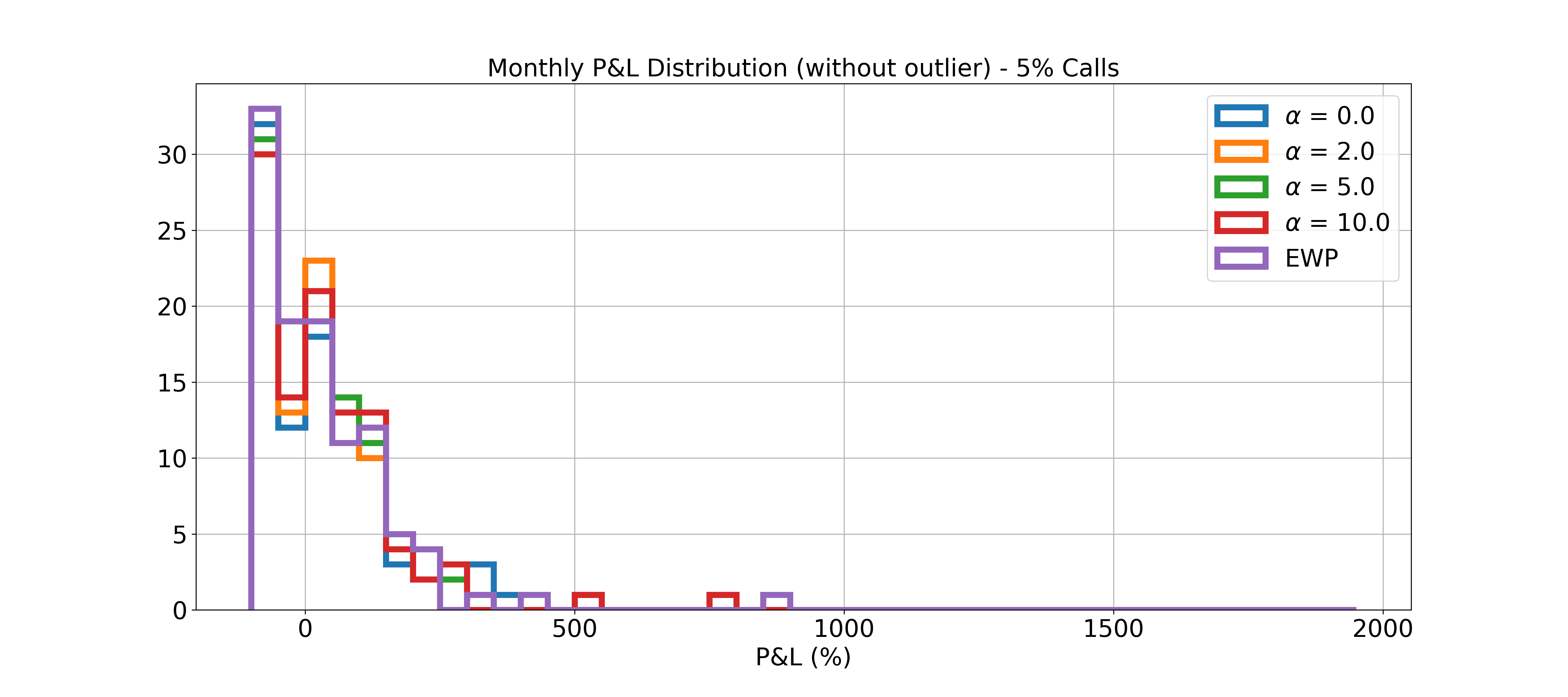}
\centering
\includegraphics[scale=0.22]{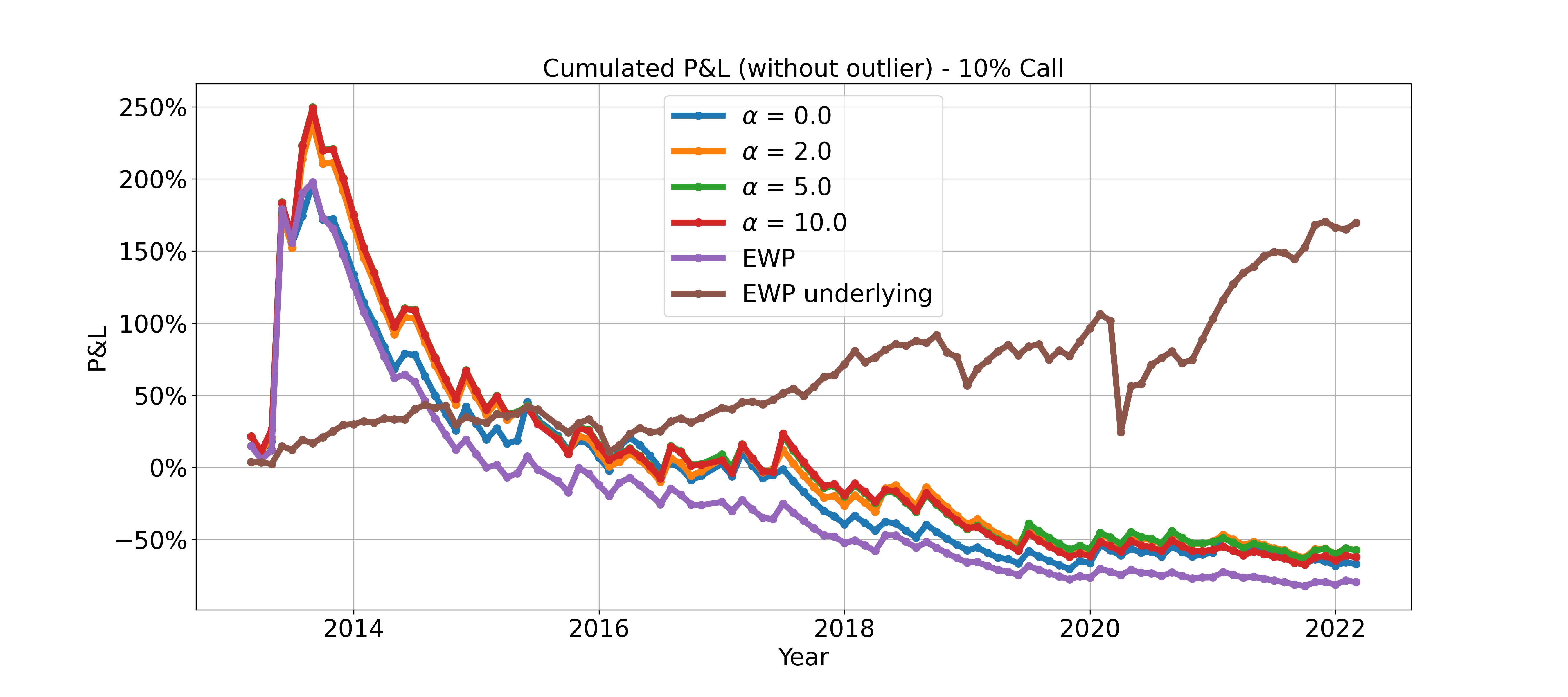}
\includegraphics[scale=0.22]{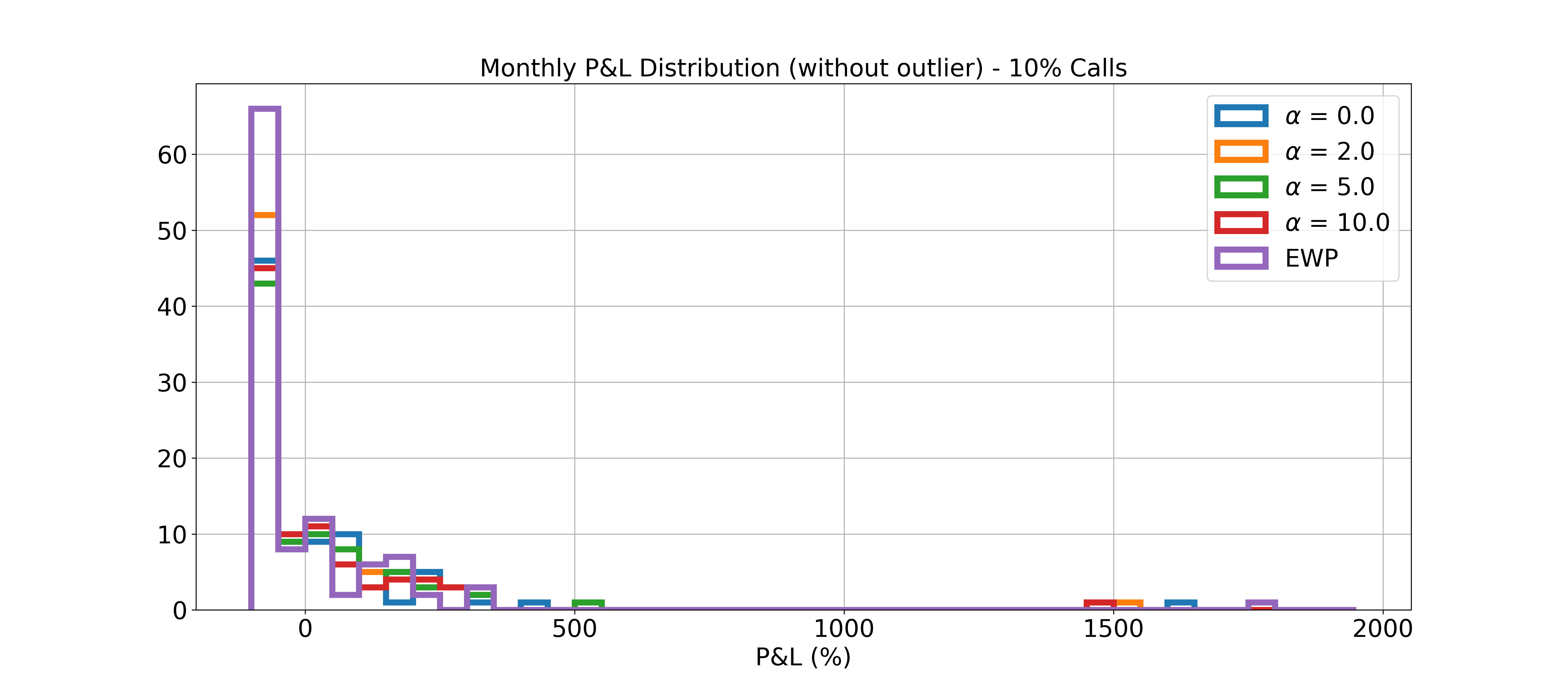}
\caption{Monthly cumulative P\&L and P\&L distribution without outliers; $5\%$ (top) and $10\%$ (bottom) Calls portfolio for $\alpha \in \{0, 2, 5, 10\}$, EWP Calls portfolio and EWP Underlying Equity portfolio.}
\label{fig:Call OTM PnL Graphs Modified}
\end{figure}

\subsection{Mixed option portfolios}

We now test the optimisation on the same stocks using three different options; 
a $5\%$ OTM Call, $5\%$ OTM Put and a $10\%$ either-side Strangle, all bought with a month to maturity. Again, we compare the mixed options portfolio performance for different values of $\alpha \in \{0, 2, 5, 10\}$, as well as the EWP scheme and the EWPu. Figure~\ref{fig:Mixed PnL hist Raw} shows the P\&L distribution of our portfolios. We observe the huge mass at $-100\%$ for the EWP being shifted to the right for the optimal portfolios, which illustrates the diversification of options' payouts. Figure~\ref{fig:Mixed PnL cumul Raw} draws the cumulative portfolios P\&L. 
In February 2020, the optimal portfolio with $\alpha=2$ barely breaks even while the others have a negative P\&L. The following Covid crash brings significant returns as most Puts and deep OTM Strangles pay out. 
Throughout the entire backtest horizon, the portfolio with $\alpha=2$ seems to perform best, followed by $\alpha=10$, $\alpha=0$ and $\alpha=5$. 
EWP has the worst performance. In Table~\ref{tab:Stats Mixed Raw}, we observe extremely high values of kurtosis and skewness, even though the EWP has values above the other optimised portfolios. Again, the Sharpe Ratios are similar and the $\alpha=2$ portfolio has the highest CCRA value while the EWP has the lowest. 
Despite the usual downwards trends, the average monthly returns are positive because of the Covid expiration.

\begin{figure}[h!]
\centering
\includegraphics[scale=0.22]{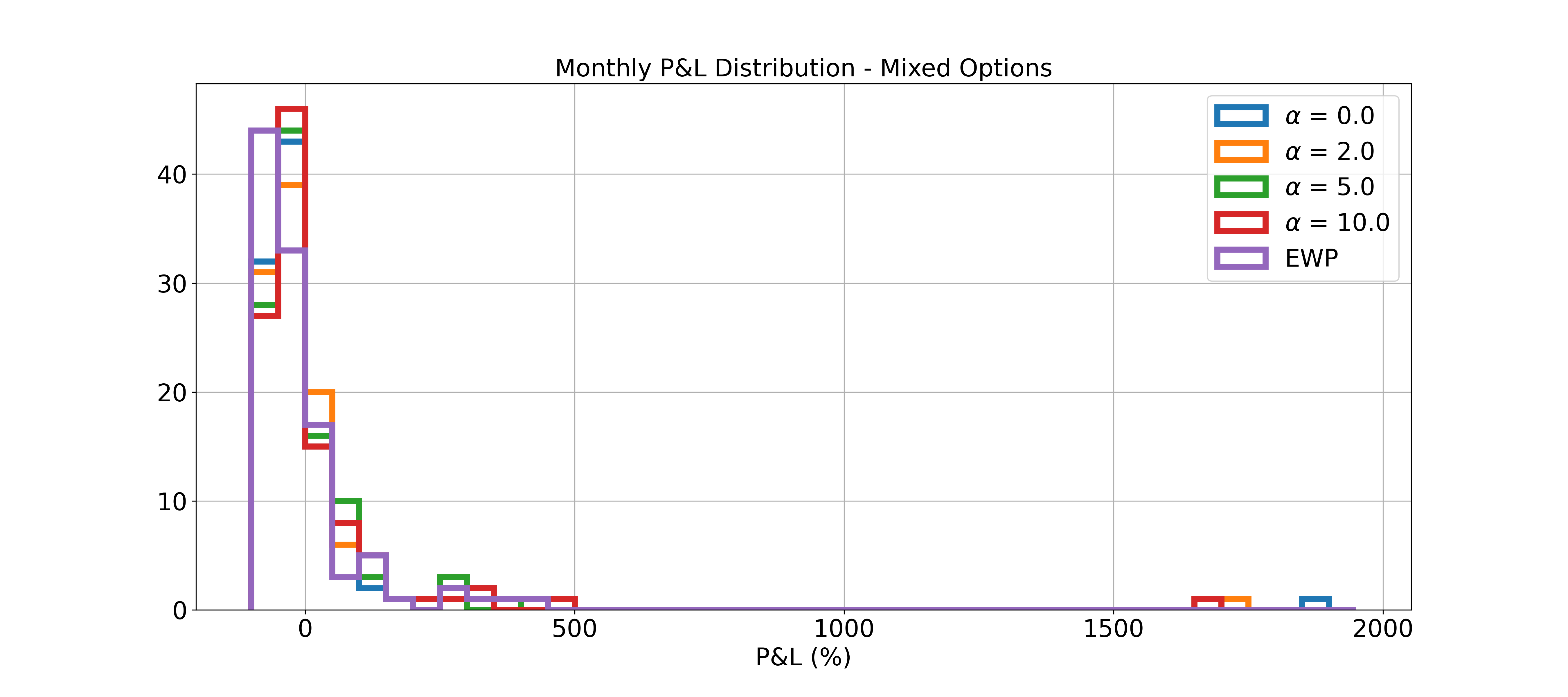}
\caption{Monthly P\&L distribution; Mixed options portfolio for $\alpha \in \{0, 2, 5, 10\}$, EWP Calls portfolio and EWP Underlying Equity portfolio.}
\label{fig:Mixed PnL hist Raw}
\end{figure}
\begin{figure}[h!]
\centering
\includegraphics[scale=0.22]{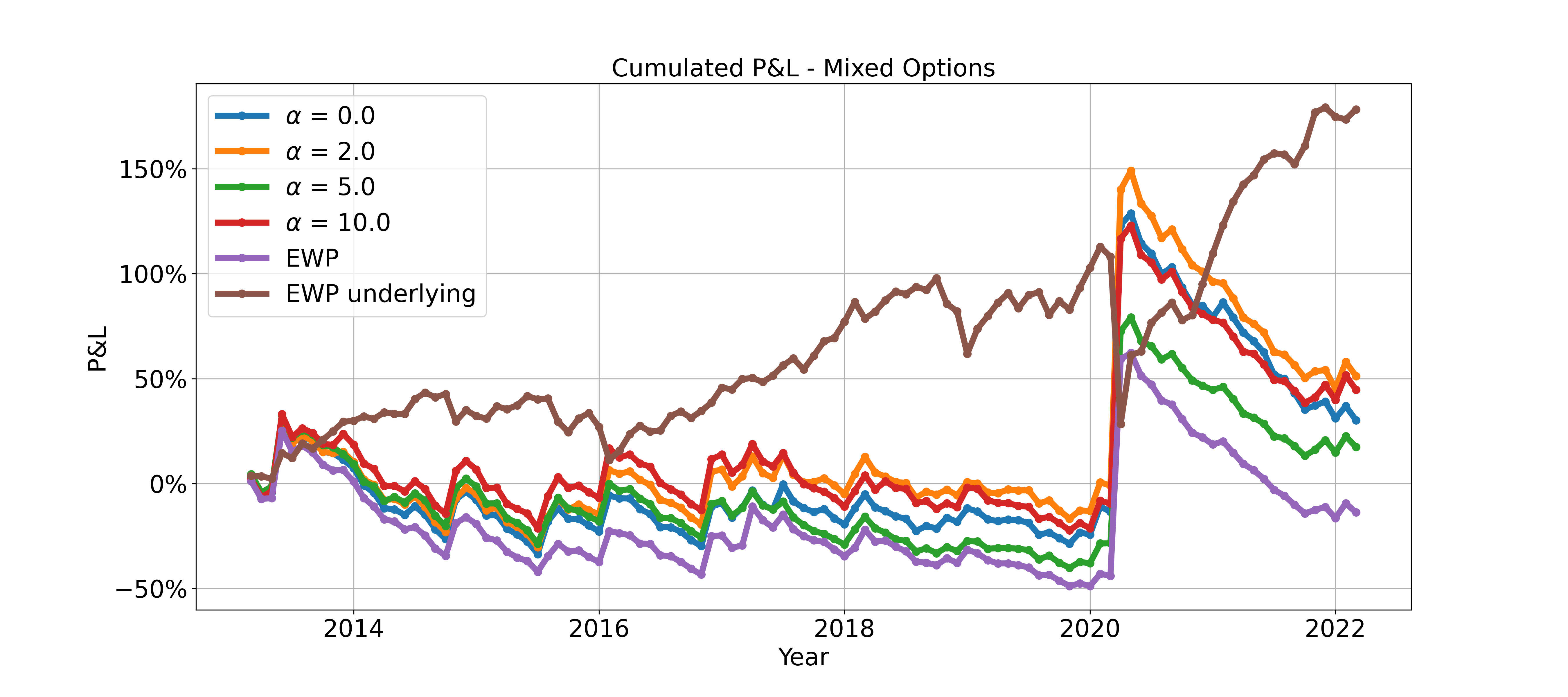}
\caption{Monthly cumulative P\&L; Mixed options portfolio for $\alpha \in \{0, 2, 5, 10\}$, EWP Calls portfolio and EWP Underlying Equity portfolio.}
\label{fig:Mixed PnL cumul Raw}
\end{figure}

Similarly, we analyse the P\&L breakdown by sector in Figure~\ref{fig:Mixed Sector}. 
The selected dates are 2016-11, 2017-03, 2017-06 and 2020-03. We observe the same pattern as in the case of~$5\%$ and~$10\%$ Calls where the P\&L contributions come from unique sectors. 
We also look at the Covid crash expiration where all sectors have unusually high returns and uneven returns, especially the Energy and Financial sectors. 
As such, we decide to discard these special data points.

\begin{figure}[h!]
\centering
\includegraphics[scale=0.22]{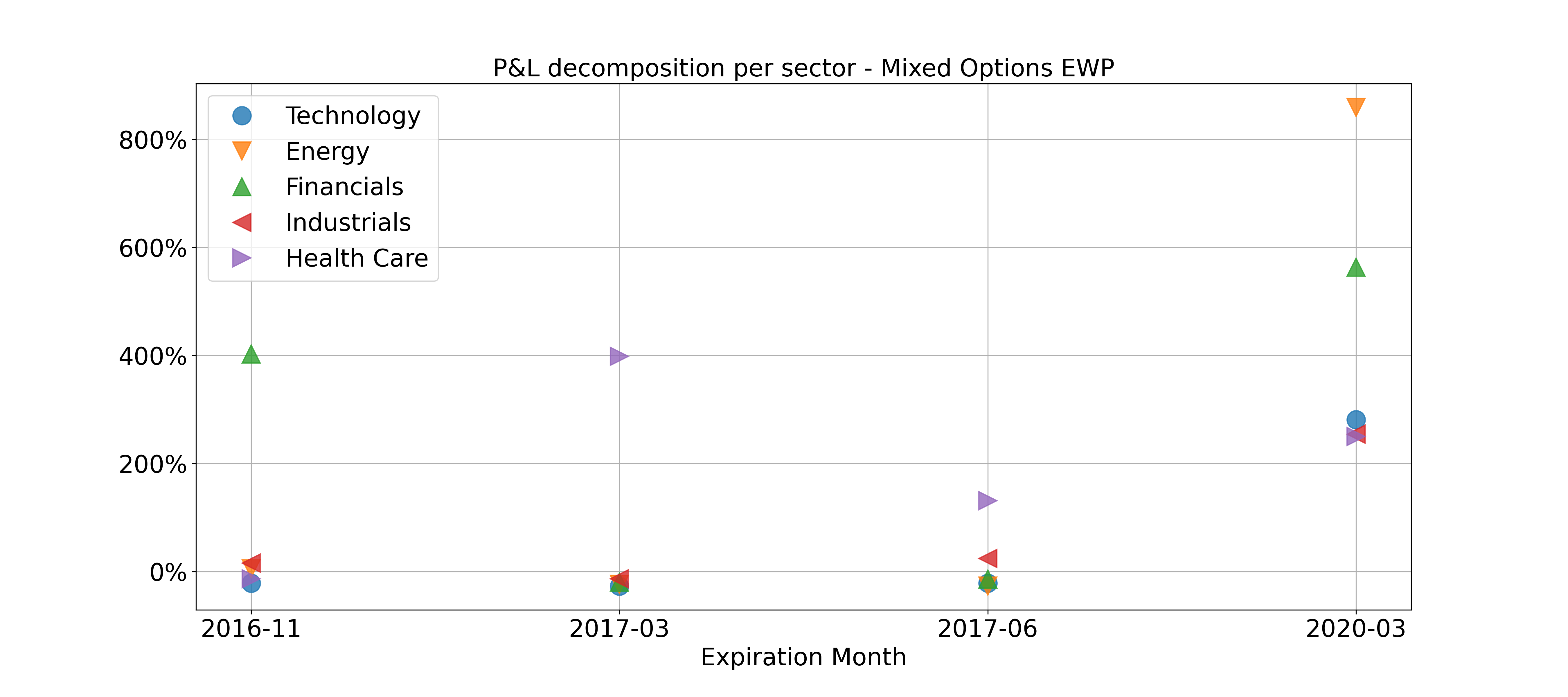}
\caption{Mixed Options EWP's P\&L breakdown by sector.}
\label{fig:Mixed Sector}
\end{figure}

The updated cumulative P\&L graph is plotted in Figure~\ref{fig:Mixed Cumul Pnl Modified}. 
Overall, after removing four expirations between 2016 and 2022, the P\&L follows a steady downward trend. 
Indeed, on the one hand, the total premium of four options is quite costly. 
On the other hand, only one side can pay out and in most cases, the deep $10\%$ OTM option will still expire worthless. 
In the end, the payout of a $5\%$ option is not sufficient to cover the cost of the strategy. We observe that the optimal portfolio with $\alpha=10$ still shows the best performance while the EWP has the worst P\&L. The optimal portfolio with $\alpha=0$ shows the second lowest P\&L and is outperformed by portfolios with a risk aversion penalty. In the updated Table~\ref{tab:Stats Mixed Modified}, we observe that the kurtosis and skewness values are much lower and the average monthly returns are negative.

\begin{figure}[h!]
\centering
\includegraphics[scale=0.2]{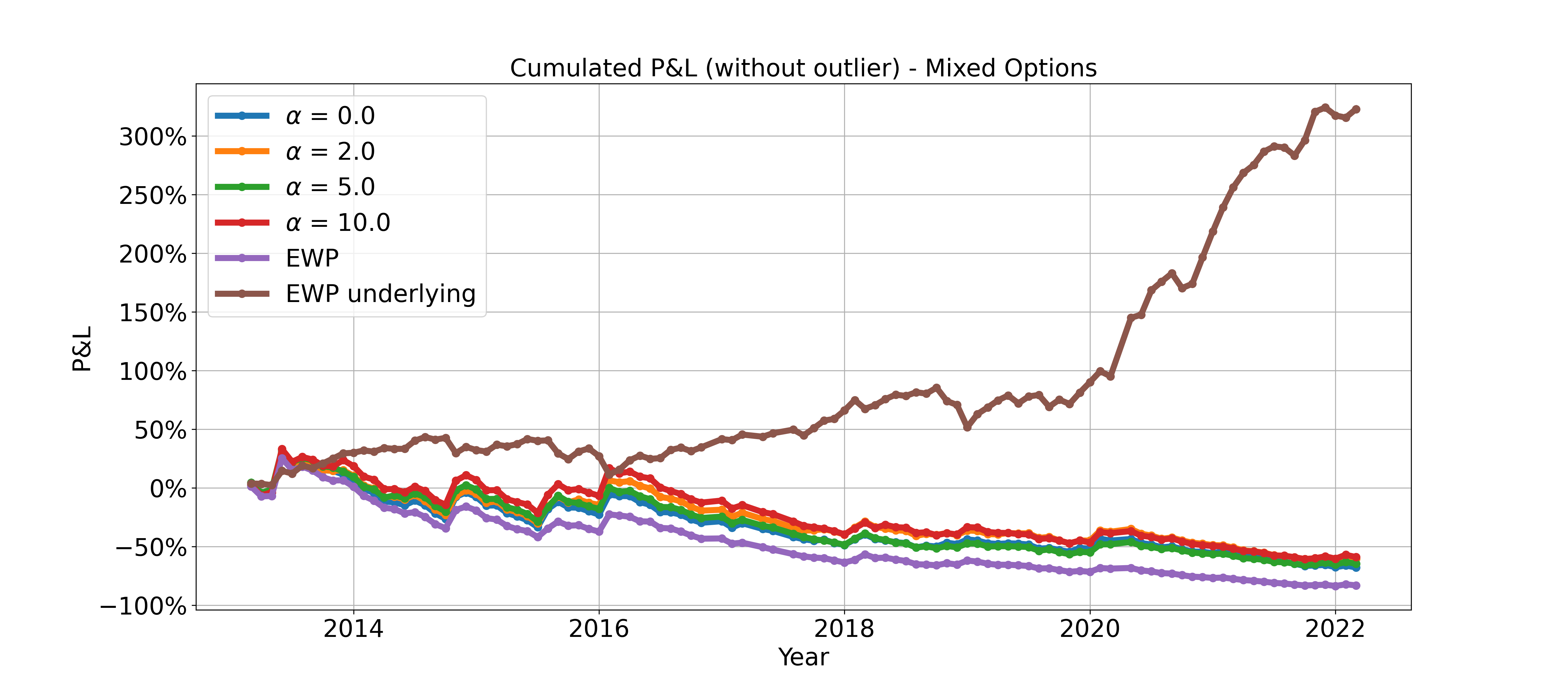}
\includegraphics[scale=0.2]{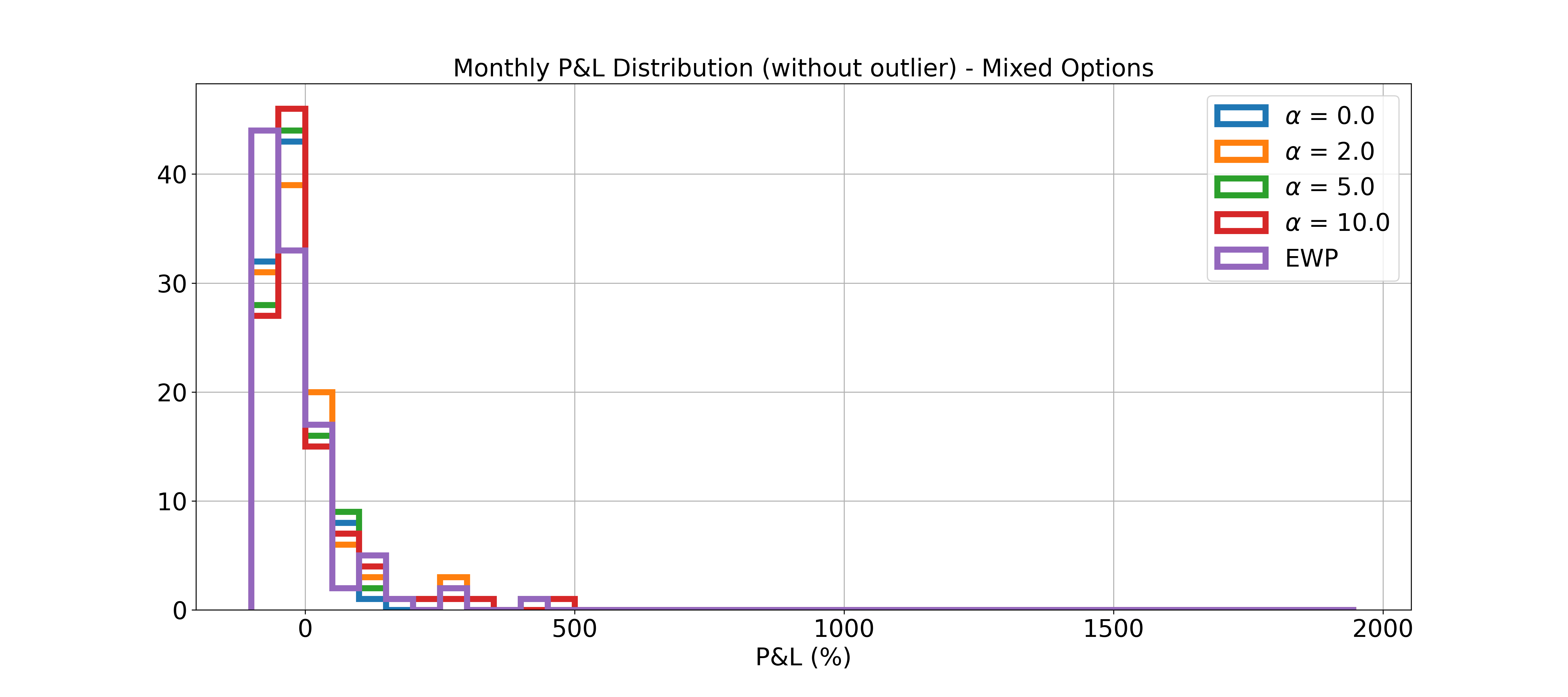}
\caption{Monthly cumulative P\&L (top) and P\&L distribution without outliers (bottom) for  the Mixed Options portfolio, the EWP portfolio and the EWP Underlying Equity portfolio.}
\label{fig:Mixed Cumul Pnl Modified}
\end{figure}

\begin{table}[h!]
\begin{tabular}{|c|c|c|c|c|c|} 
\hline
Statistics  & $\alpha = 0$ &$\alpha = 2$&$\alpha = 5$&$\alpha = 10$& EWP\\
 \hline
Total return ($\%$)            & 30.2  & 51.2  & 17.42  & 44.72  & -13.7  \\
 Average monthly return ($\%$) &  13.1 &  13.6 & 10.4  & 12.9  & 10.5  \\
 Skewness                      & 7.45  & 7.12  & 7.36  & 6.99  &  8.0 \\
 Kurtosis                      & 64.8  & 60.4  & 63.7  & 58.5  & 72.1  \\
 Annualised Sharpe Ratio                  &  0.22 & 0.25  & 0.20   &  0.24 &  0.16 \\
 CRRA (1e-4)                   &  -6.0 &  10.4 & -11.5  & 6.5  & -47.8    \\
 \hline
\end{tabular}
\caption{Mixed Options portfolio Statistics summary.}
\label{tab:Stats Mixed Raw}
\end{table}

\begin{table}[h!]
\begin{tabular}{|c|c|c|c|c|c|} 
\hline
Statistics  & $\alpha = 0$ &$\alpha = 2$&$\alpha = 5$&$\alpha = 10$& EWP\\
 \hline
Total return ($\%$)            &  -67.9 & -60.6  & -65.7  & -59.0  & -83.1  \\
 Average monthly return ($\%$) & -10.1  & -8.0  & -9.4  & -7.4  &  -17.7 \\
 Skewness                      &  2.59 & 2.41  & 2.43  & 2.65  & 2.74  \\
 Kurtosis                      & 8.51  & 7.72  & 8.0   & 9.36 & 9.69  \\
 Annualised Sharpe Ratio                  & -0.41  & -0.34  & -0.40  & -0.30  & -0.76  \\
 CRRA (1e-2)                   & -1.19  & -0.99  & -1.09  & -0.96  &    -1.80\\
 \hline
\end{tabular}
\caption{Mixed Options portfolio Statistics summary without outlier.}
\label{tab:Stats Mixed Modified}
\end{table}

\section*{Conclusion}
We provided a new methodology to optimise portfolios of European options, taking into account the specific behaviours of options compared to their underlyings.
Our key tool is a new dependency matrix, replacing the classical variance-covariance matrix, which allows to relate the
different option payoffs' conditional probabilities.
We provided closed-form expressions for such probabilities,
assuming a copula structure for the underlying securities.
We finally tested our method our historical data and showed that it is efficient and can be extended easily to large portfolios of (mixed) options.


\appendix

\section{Additional figures}

\subsection{Dependency matrices}

We provide further empirical examples of dependency matrices~$\LLa$ for different options on S\&P 500 stocks, using data from the last eight years.
In Figure~\ref{fig:DepMatrixPut10}, 
we show a $10\%$ OTM Put option dependency matrix. 
The coefficients are extremely high in parts, first as $10\%$ price swings are very rare and second because the sampling period
includes an overall bull market for the S\&P.
The contrast can be seen in Figure~\ref{fig:DepMatrixCall10}
with the corresponding Call options,
for which the coefficients are much smaller,  reflecting that large positive returns were more probable than large negative returns over the period. 
Figure~\ref{fig:DepMatrixStrangle10}
shows a $5\%$ either-side Strangle. 
The coefficients are lower than in 
Figures~\ref{fig:DepMatrixPut10} and~\ref{fig:DepMatrixCall10} 
as we are effectively considering the sum of a Put and Call option, so the probability of payout is higher than for just Calls or Puts. 
We also clearly see dependence blocks, 
most significantly around Financials and Automobiles.

\begin{figure}
\centering
\includegraphics[scale=0.3]{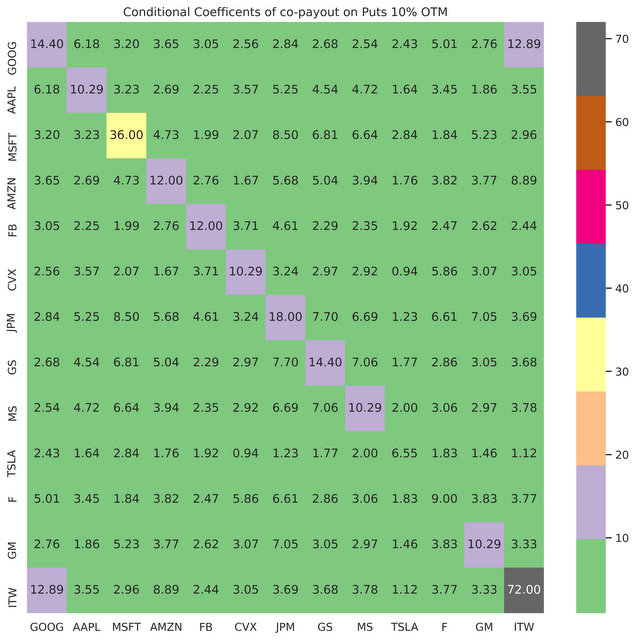}
\caption{Dependency matrix $10\%$ OTM Put options.}
\label{fig:DepMatrixPut10}
\end{figure}

\begin{figure}
\centering
\includegraphics[scale=0.3]{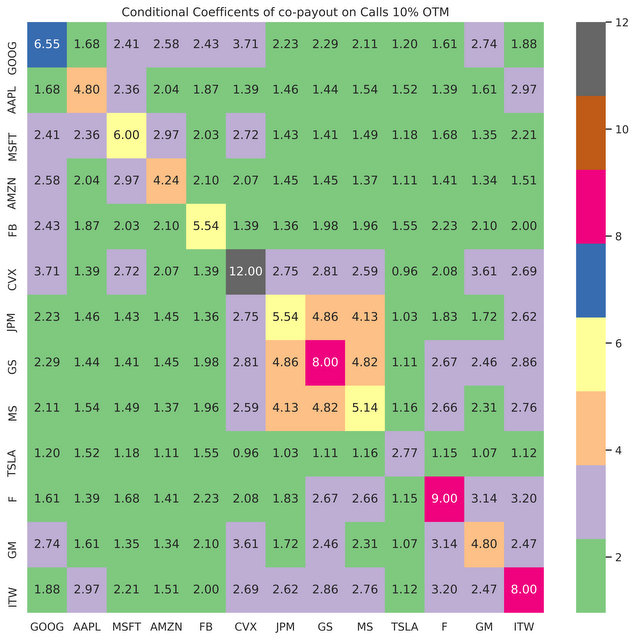}
\caption{Dependency matrix $10\%$ OTM Call options.}
\label{fig:DepMatrixCall10}
\end{figure}

\begin{figure}
\centering
\includegraphics[scale=0.3]{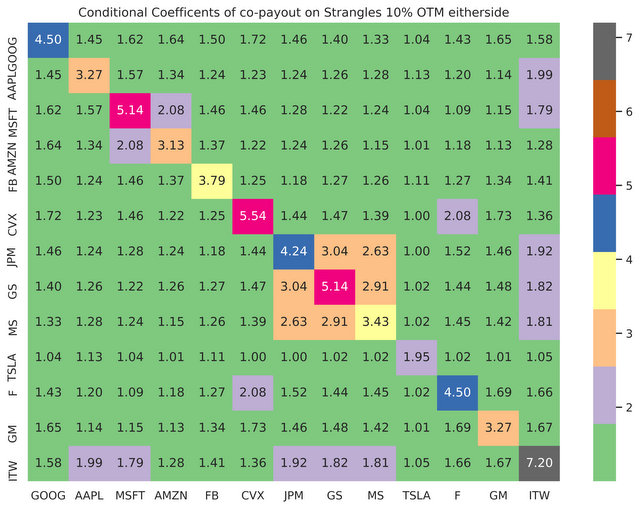}
\caption{Dependency matrix $10\%$ OTM either-side Strangle options.}
\label{fig:DepMatrixStrangle10}
\end{figure}

\newpage 
\bibliographystyle{siam}
\bibliography{Biblio}

\end{document}